\newtheorem{theorem}{Theorem}
\newtheorem{lemma}[theorem]{Lemma}
\newcommand{\rev}[1]{\textcolor{black}{#1}}
\newcommand{\greg}[1]{\textcolor{black}{#1}}
\newcommand{\alex}[1]{\textcolor{black}{#1}}
\newcommand{\olivier}[1]{\textcolor{black}{#1}}
\renewcommand{\P}{\mathbb P}
\title{Research Article: Computing Individual Risks based on Family History in Genetic Disease in the Presence of Competing Risks}
\author[1,2]{G. Nuel\thanks{corresponding author, gregory.nuel@math.cnrs.fr}}
\author[3,4]{A. Lefebvre}
\author[5]{O. Bouaziz}
\affil[1]{LPMA, UMR CNRS 7599, Paris, France}
\affil[2]{UPMC, Sorbonne universités, Paris, France}
\affil[3]{UPSud, Paris-Saclay, Orsay, France}
\affil[4]{Institut Curie, Paris, France}
\affil[5]{MAP5, UMR CNRS 8145, Paris, France}
\begin{document}
\maketitle

\begin{abstract}
When considering a genetic disease with variable age at onset (ex: diabetes,  familial amyloid neuropathy, cancers, etc.), computing the individual risk of the disease based on family history (FH) is of critical interest both for clinicians and patients. Such a risk is very challenging to compute because: 1) the genotype $X$ of the individual of interest is in general unknown; 2) the posterior distribution $\P(X | \mathrm{FH},T>t)$ changes with $t$ ($T$ is the age at disease onset for the targeted individual); 3) the competing risk of death is not negligible. 

In this work, we present a modeling of this problem using a Bayesian network mixed with (right-censored) survival outcomes where hazard rates only depend on the genotype of each individual. We explain how belief propagation can be used to obtain posterior distribution of genotypes given the FH, and how to obtain a time-dependent posterior hazard rate for any individual in the pedigree. Finally, we use this posterior hazard rate to compute individual risk, with or without the competing risk of death. 

Our method is illustrated using the Claus-Easton model for breast cancer (BC). This model assumes an autosomal dominant genetic risk factor such as non-carriers (genotype 00) have a BC hazard rate $\lambda_0(t)$ while carriers (genotypes 01, 10 and 11) have a (much greater) hazard rate $\lambda_1(t)$. Both hazard rates are assumed to be piecewise constant with known values (cuts at 20, 30, \ldots, 80 years). The competing risk of death is derived from the national French registry. 

\vspace{1em}
\noindent Keywords: piecewise constant hazard, Bayesian network, belief propagation, Hardy-Weinberg, Mendelian transmission.
\end{abstract}

\section{Introduction}

Complex diseases with variable age at onset typically have many interacting factors such as the age, lifestyle, environmental factors, treatments, genetic inherited components. The genetic component is generally composed of one or several genes including major genes for which a deleterious mutation rises significantly the risk of the disease and/or minor genes which participation in the disease is moderate by itself. 

The mode of inheritance can be monogenic if a mutation in a single gene is transmitted or polygenic if mutations in several genes are transmitted. As an example of a major gene in a complex disease, the BRCA1 gene is well known to be strongly correlated with ovarian and breast cancer since the 90s \citep{hall1990linkage,claus1994autosomal}. Carriers of a deleterious mutation in BRCA1 gene have a much higher risk to be affected with relative risks ranging from 20 to 80 but deleterious mutations in BRCA1 gene only explain 5 to 10 \% of the disease \citep{mehrgou2016importance} as many other implicated known or unknown genes exist along with sporadic cases (cases with no inherited component).
    
In other rare genetic diseases such as the Transthyretin-related Hereditary Amyloidosis
(THA), no sporadic cases are found and therefore the incidence is equal to zero among non-carriers and all affected individuals are necessarily carriers of a deleterious mutation \citep{plantebordeneuve2003gst,alarcon2009pel}.

The family history (FH) of such diseases is often the first tool for clinicians to detect a family of carriers of a deleterious mutation as any unusual accumulation of cases in relatives leads to suspect a deleterious allele in the family. With the appropriate model and computation, the FH can be used to better target the most appropriate individuals for a genetic testing and / or to identify high-risk individuals who require special attention (monitoring and/or treatments).

The first challenge to compute such a model comes from the fact that genotypes are mostly (if not totally) unobserved and that posterior carrier probability computations must sum over a large number of familial founders' genotypes configurations. Once such computations are carried out, deriving posterior individual disease risk is also a challenging task since the posterior carrier distribution changes over time and must be accounted for. Finally, for diseases with possibly late age at onset (\emph{e.g.} cancer), the competing risk of death is not negligible and must be accounted for.
 

A competing risk situation occurs when an event (called a competing event) precludes the occurrence of the event of interest. This is typically the case for late-onset diseases as the risk of death is not negligible for advanced age. Ignoring the risk of death would amount to assume that death cannot happen and would therefore lead to overestimate the cumulative incidence (the probability of having the disease before any time point). Famous examples of such situations include dementia where the patients are of a particularly advanced age and have a high risk of dying as in \citet{jacqmin2014} or \citet{wanneveich2016}, or studies on geriatric patients \citep[see for instance][]{berry2010}.

\olivier{Classical familial risk models such as Claus-Easton \citep{claus1991genetic,easton1993genetic}, BOADICEA \citep{antoniou2004boadicea}}, \greg{or the BayesMendel models \citep[BRCAPRO, MMRpro, PancPRO and MelaPRO, see][]{chen2006prediction}} \olivier{do not take into account the competing event of death. As a result, it is likely that individual predictions will tend to be overestimated from these models \citep{de2012estimation}.} \greg{The main result of the present work is that we show how to derive individual risk predictions from the family history while taking into account the competing risk of death, which is a new contribution to the best of our knowledge.}

\greg{Another interesting point is that, unlike most similar publications, we here provide all the necessary details to integrate the likelihood over the unobserved genotypes and to compute posterior genotype distributions using Bayesian network and sum-product algorithms. One should not that these models and algorithms clearly are often used in the context of genetics \citep[see][for a few examples]{lauritzen1996graphical,o1998pedcheck,fishelson2002exact,lauritzen2003graphical,palin2011identity}, but rarely fully detailed \citep[see][for example]{chen2006prediction}.}

\greg{It should also be noted that the genetics community usually prefers to rely on simple \emph{peeling} algorithms rather than Bayesian network for pedigree computations but the two concepts are in fact totally equivalent, and the sum-product algorithm presented in this paper can indeed be seen as a simple Bayesian network based reformulation of the most general peeling-based algorithm developed so far \citep{totir2009efficient}.}



\greg{The paper is organized as follows:} firstly, in Section~\ref{sec:bn} we introduce a formal generic Bayesian network model adaptable to any genetic disease with variable age at onset. Secondly, in Section~\ref{sec:pcarrier}, we provide in this context all the necessary details to carry belief propagation on this model, and express the marginal posterior carrier distribution using Bayesian network's potentials. Thirdly, in Section~\ref{sec:risk}, we give closed-form formulas for the posterior individual disease risk, and introduce a simple numerical algorithm allowing to take into account the competing risk of death. Finally, in Section~\ref{sec:results}, all the methods are illustrated with the Claus-Easton model \greg{for breast cancer using the disease model and the parameters of  \citet{claus1991genetic,easton1993genetic}.} \olivier{In particular, individual predictions derived by taking into account the competing risk of death or ignoring it are compared, which emphasizes the importance of properly taking into account competing risk of death in such models.}

\section{Materials and Methods}

In this section, we first introduce our model (Section~\ref{sec:bn}) as a Bayesian network. We next explain how to perform belief propagation in order to obtain posterior carrier distributions (Section~\ref{sec:pcarrier}). Finally, we provide all the details needed to derive disease risks predictions from these posterior distributions, including taking into account the competitive risk of death (Section~\ref{sec:risk}). 

\subsection{The Bayesian Network}\label{sec:bn}

We consider a total of $n$ (related) individuals. With $\mathcal{I}=\{1,\ldots,n\}$, we denote by $\mathcal{F}\subset\mathcal{I}$ the subset of the founders (i.e. individuals without ancestors in the pedigree) and we denote by $\mathcal{I}\setminus\mathcal{F}$ the set of non-founders (i.e. with ancestors in the pedigree). Let $\boldsymbol X = (X_1,\ldots,X_n) \in \{00,01,10,11\}^n$ be the genotypic distribution\footnote{For the sake of simplicity, we consider here a simple bi-allelic gene but multi-allelic genes can obviously be easily considered.} of the whole family, where $X_i$ denotes the genotype of Individual $i$. Let $\boldsymbol T = (T_1,\ldots,T_n) \in \mathbb{R}^n $ be the time vector representing
the age at diagnosis of all individuals. The joint distribution of $(\boldsymbol X, \boldsymbol T)$ is given by:
\begin{equation}\label{eq:bn}
\mathbb{P} \left(\boldsymbol X, \boldsymbol T \right) = \underbrace{\prod_{i \in \mathcal{F}} \mathbb{P}(X_i)  \prod_{i\in \mathcal{I}\setminus \mathcal{F}} \mathbb{P} \left (X_i |  X_{\text{pat}_i} , X_{\text{mat}_i} \right)}_\text{genetic part}
\times 
\underbrace{ \prod_{i \in \mathcal{I}} \P \left (T_i |  X_i \right) }_\text{survival part}
\end{equation}
which corresponds to the definition of a Bayesian Network (BN). See \citet{koller2009probabilistic} for more details. The genetic part of Eq.~(\ref{eq:bn}) only relies on the ``classical''  Mendelian assumption that the distribution of a non-founder genotype only depends on the parental genotypes. The survival part makes the strong assumption that all $T_i$ are conditionally independent given $X_i$. This assumption is clearly not true when considering any other familial effect on the disease (\emph{e.g.} polygenic effect, environmental exposure, etc.) which is often taken into account using a familial random effect (often called \emph{frailty} in the survival context). Such familial random effect is for example assumed to account for a polygenic effect in the BOADICEA model \citep{antoniou2002comprehensive,antoniou2004boadicea}. \rev{Note that for the sake of simplicity, the symbol ``$\mathbb{P}$'' corresponds through the whole paper either to a probability measure or to a density.}

The extension of the present model to frailty models such as BOADICEA is clearly possible and, in many ways, quite straightforward. However, for the sake of simplicity, we focus here on a simpler model and will briefly discuss the extension in the conclusion section. However, even with the strong assumption that $T_i$ only depends on $X_i$, since (the basically unobserved) $\boldsymbol X$ has a strong correlation structure within the pedigree, so does $\boldsymbol T$.

We can see on Fig.~\ref{fig:fam4} an example of a moderate size (hypothetical) family with a severe history of breast and ovarian cancer. This family has a total of $n=12$ individuals with $\mathcal{F}=\{1,2,3,4\}$ and $\mathcal I \setminus \mathcal F=\{5,6,7,8,9,10,11,12\}$. 
There is no inbreeding \rev{(mating between individuals with a common ancestor)} in this family but a mating loop \rev{(two families joined more than once by mating)} due to the two brothers of the first nuclear family having children with two sisters of the second nuclear family. Such looped pedigree can be tricky to represent and this explains why Individual~7 appears twice (with an identity link) in Fig.~\ref{fig:fam4}.


\olivier{One should note that loops in pedigree are not the same as cycles in the Bayesian networks framework in the sense that the underlying conditional dependence structure of the model remains a proper directed acyclic graph even in the presence of pedigree with loops.}

\begin{figure}
\begin{center}
\includegraphics[width=0.8\textwidth,trim=20 620 170 50,clip]{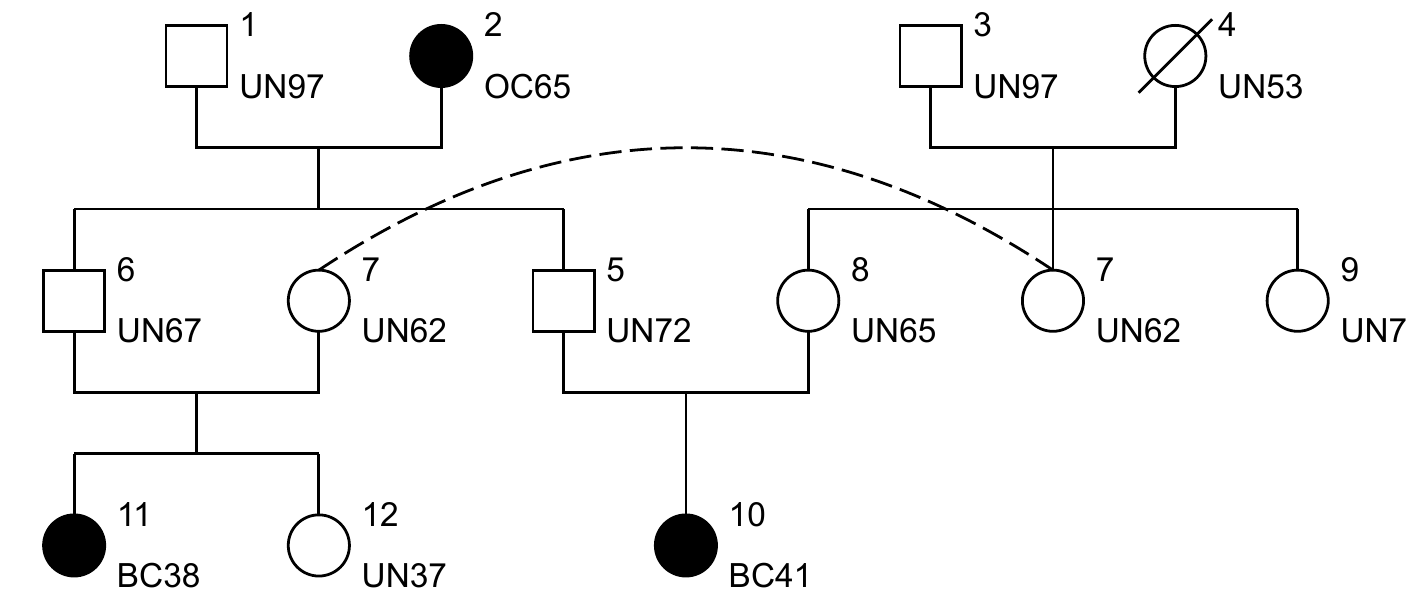}
\end{center}
\caption{A hypothetical family with a severe FH of cancer. \rev{Squares correspond to males, circles to females, and affected individual are filled in black.} Individual id on the top-right of the nodes, personal history of cancer (UN=UNaffected; BC=Breast Cancer; OC=Ovarian Cancer) on the bottom-right. The dashed line represents an identity link used to represent the mating loop \rev{(due to the mating between individuals 5/8 and 6/7)} between brothers 5 and 6, and sisters 7 and 8. }\label{fig:fam4}
\end{figure}


\paragraph{Genetic Part.}
 
 For the genetic part, we assume that founders' genotypes are distributed according to the 
Hardy-Weinberg distribution with disease allele frequency $f$. It means that for any founder $i\in\mathcal{F}$ we have $\mathbb{P}(X_i=00)=(1-f)^2$, $\mathbb{P}(X_i=01)=\mathbb{P}(X_i=10)=f(1-f)$, and $\mathbb{P}(X_i=11)=f^2$. \rev{This assumption is extremely frequent in family genetics and usually reasonable since it corresponds to the stationary distribution we observe in a population under mild assumptions. However, one should note that other distributions can easily be considered if necessary (\emph{e.g.} genotype $11$ forbidden because it is lethal).}
 For the non-founder we simply assume a Mendelian transmission of the alleles, but unbalanced transmission patterns can also be considered. 

The genetic part of the model can also be easily extended to account for various constraints. For example, the presence of monozygous twins, say individuals $i$ and $j$, only requires one to add an identity variable between the two genotypes:  $I_{i,j}\in\{0,1\}$ such as $\mathbb{P}(I_{i,j} | X_i,X_j)=\mathbf{1}\{X_i=X_j\}$. Genetic tests (including error or not) can also be incorporated as additional variables $G_i$ such as $\P(G_i | X_i)$ corresponding to the test specificity and sensibility. Finally, assuming lethal genotypes (\emph{e.g.} genotype $11$) is done straightforwardly by setting to $0$ the probability of carrying such genotype.
\rev{This is equivalent to working conditionally on $\{X_i\neq 11 \text{ for all $i$}\}$ which obviously alter all genotype distributions, including Hardy-Weinberg for founders.}


\paragraph{Survival Part.}

We place ourselves in the classical survival framework, denoting by $\lambda(t)$  the (time dependent) hazard function, by $S(t)$ the survival function defined as $S(t) = \exp(-\Lambda(t))$ where $\Lambda(t)=\int_0^t \lambda(u)du$ is the cumulative hazard. 

We assume an autosomal dominant model where non-carriers have a disease incidence $\lambda_0(t)$ and carriers have a disease incidence $\lambda_1(t)$. This simple assumption results in the following expression of the survival part of the model:
\begin{equation}\label{eq:survpart}
\left\{
\begin{array}{l}
\mathbb{P}(T_i > t |  X_i=00 )=S_0(t) \quad\text{and}\quad
\mathbb{P}(T_i = t |  X_i=00 )=S_0(t)\lambda_0(t)\\
\mathbb{P}(T_i > t |  X_i\neq 00 )=S_1(t)  \quad\text{and}\quad
\mathbb{P}(T_i = t |  X_i\neq 00 )=S_1(t)\lambda_1(t)
\end{array}
\right..
\end{equation}
\rev{As explained above, the symbol ``$\mathbb{P}$'' corresponds to a (conditional) probability measure for the event $\{T_i>t\}$ and to a density for the punctual event $\{T_i=t\}$}.

For example, in the context of the THA, non-carriers cannot be affected ($\lambda_0(t)\equiv 0$) and only carriers have an age-dependent incidence. In the context of breast cancer, $\lambda_0(t)$ might be the incidence for non BRCA carriers and $\lambda_1(t)$ the incidence for BRCA carriers (BRCA1 or BRCA2). 

Of course the simple model suggested in Eq.~(\ref{eq:survpart}) can easily be extended to account for other genetic models (\emph{e.g.} recessive, additive, gonosomal (\emph{i.e.} non-autosomal), with parent-of-origin effect, etc.) as well as for any known covariates (\emph{e.g.} BMI, smoking, other diseases, etc.) using a classical proportional hazard model. 
 
Hazard rates $\lambda_0(t)$ and $\lambda_1(t)$ are typically described by the literature as piecewise constant hazards (PCHs), but our model allows for any parametric or non-parametric shape as long as hazard rates are provided (\emph{e.g.} \rev{hazard rates of} Weibull distributions, Gaussian survival, etc.).

\subsection{Carrier Risk}\label{sec:pcarrier}

%

For all individual $i$ let us denote by $\mathrm{PH}_i$ his/her personal history of the disease. In the case where Individual $i$ was diagnosed with the disease at age $t_i$ we have $\mathrm{PH}_i=\{T_i=t_i\}$. If Individual $i$ was unaffected at age $t_i$ (age at the last follow-up), the variable $T_i$ is right-censored and we have $\mathrm{PH}_i=\{T_i>t_i\}$. From now on, we denote by $\mathrm{FH}$ the family history of the disease. This includes the personal history of all individuals and all possible additional constraints or informations (\emph{e.g.} monozygous twins, genetic tests, lethal alleles, etc.). \rev{Formally, we can define $\mathrm{FH}=\cup_i (\mathrm{PH}_i \cup \{X_i \in \mathcal{X}_i\})$ where $\mathcal{X}_i \subset \{00,01,10,11\}$ is the subset of allowed values for $X_i$ (\emph{e.g.} $\mathcal{X}_i = \{00,01,10\}$ if we know that the genotype $11$ is lethal, $\mathcal{X}_i = \{00\}$ if we know that a particular individual is a non-carrier, etc.).}
Even with genetic testing, it is essential to understand that $\boldsymbol X$ is, at best, partially observed. Indeed, even with a (hypothetical and unrealistic) 100\% specificity/sensitivity test, a positive heterozygous carrier status cannot distinguish between genotypes $01$ and $10$. Moreover, genetic tests are in general only available for a few individuals in the 
whole pedigree. Accounting for the unobserved genotypes is therefore of utmost importance.  

Following the classical BN notations, we write the so-called \emph{evidence} $\P(\mathrm{FH})$ as the simple following  sum-product of \emph{potentials}:
\begin{equation}\label{eq:sumproduct}
\P(\mathrm{FH}) = \sum_{X_1} \ldots \sum_{X_n} \prod_{i=1}^n K_i \left(X_i | X_{\mathrm{pa}_i} \right)
\end{equation}
where the potentials are defined by:
\begin{equation}
K_i \left(X_i | X_{\mathrm{pa}_i} \right) = \P(\mathrm{PH}_i | X_i) \times \left\{
\begin{array}{ll}
\P \left (X_i |  X_{\text{pat}_i} , X_{\text{mat}_i} \right) & \text{if $i \in \mathcal{I}\setminus \mathcal{F}$}\\
\P \left (X_i  \right) & \text{if $i \in  \mathcal{F}$}
\end{array}
\right.
\end{equation}
where $\P(\mathrm{PH}_i | X_i)$ is either $\P(T_i=t_i | X_i)$ or $\P(T_i>t_i | X_i)$ and can be obtained through Eq.~(\ref{eq:survpart}). \rev{Note that $\mathrm{pa}_i \subset \mathcal{I}$ denote the parental set of Individual $i$ (empty for founders), and that $X_\mathcal{J}=(X_j)_{j \in \mathcal{J}}$ 
for any $\mathcal{J} \subset \mathcal{I}$.}
As explained above, any additional information or constraint might and should be added directly into the potentials. 

Since $\boldsymbol X$ has $4^n$ possible configurations in the worst case, it is clearly impossible to simply enumerate these configurations even for moderate size pedigrees (e.g., for $n=10$ or $n=20$). We therefore need a more efficient algorithm to compute Eq.~(\ref{eq:sumproduct}). An efficient solution is provided by the Elston-Stewart algorithm \citep{elston1992elston} in the particular (and frequent) case where the pedigree has no loop. The basic idea is to eliminate variables from the sum-product (\emph{peeling} in the Elston-Stewart literature) from the last generations up to the oldest common ancestor. The resulting complexity $\mathcal{O}(n \times 4^3)$ clearly allows one to deal with arbitrary pedigree size as long as there is no loop. 

Unfortunately, loops (inbreeding or mating) are not totally uncommon in pedigrees and therefore have to be accounted for. A simple extension of the Elston-Stewart algorithm consists in using loop breakers: working conditionally to a few number of key genotypes that can be considered as duplicated individuals with known genotypes in a pedigree with no loop. For example, in Fig.~\ref{fig:fam4}, Individual $7$ is a possible loop breaker. By performing a classical Elston-Stewart algorithm for each genotypic configuration of the loop breakers, $\P(\mathrm{FH})$ can be computed with complexity $\mathcal{O}(n \times 4^{\ell+3})$ where $\ell$ is the number of loop breakers. 

In the context of Bayesian networks, computing $\P(\mathrm{FH})$ (and, in fact, the whole $\P(\boldsymbol X,\mathrm{FH})$ distribution) is typically done through \emph{belief propagation} (BP)\footnote{Also called \emph{sum-product} algorithm.} with a $\mathcal{O}(n \times 4^k)$ complexity where $k$ is the \emph{tree-width} of the graphical model \citep[see][for more details]{koller2009probabilistic}. For a pedigree with no loop, $k=3$ and the BP complexity is strictly the same than Elston-Stewart, but for more complex pedigrees, $k$ usually increases much slower than $\ell+3$ and, as a result, BP is often dramatically faster than Elston-Stewart with loop breakers. 

In order to achieve this, BP basically eliminates variables from the sum-product of Eq.~(\ref{eq:sumproduct}) in a suitable order. In that sense, it is very similar to the notion of \emph{cutset} long used to compute likelihoods in complex pedigrees \citep[see][for a recent reference on the MENDEL package]{lange2013mendel}. But BP has the noticeable advantage to allow obtaining the full posterior distribution $\P(\boldsymbol X |\mathrm{FH})$ for the same algorithmic complexity while likelihood-based approaches need to repeat many cutset eliminations to achieve the same results. As a consequence, it should not be surprising to see that, in parallel with the classical genetic literature \citep{elston1992elston,kruglyak1996parametric,lange2013mendel} many authors have been using BP and BN to deal with genetic models \citep{lauritzen1996graphical,o1998pedcheck,fishelson2002exact,lauritzen2003graphical,palin2011identity}.

\greg{Let us finally point out that the genetics community has put considerable efforts in developing Elston-Stewart algorithms for any Bayesian network counterpart, claiming that \emph{peeling-based} algorithms are more natural for geneticists than junction-tree based ones. Note however that the most general version of these \emph{peeling} algorithms \citep{totir2009efficient} is in fact \emph{exactly} equivalent to the classical junction-tree based forward/backward algorithm presented below.}

For completeness, we will now briefly recall all the minimal necessary results to implement BP in the context of our model. We nevertheless encourage the interested reader to refer to more classical references like \citet{lauritzen2003graphical} or \citet{koller2009probabilistic} for more details.

\paragraph{Variable Elimination and Junction Tree.}

\begin{figure}
\begin{center}
\includegraphics[width=1.0\textwidth,trim=40 30 90 30,clip]{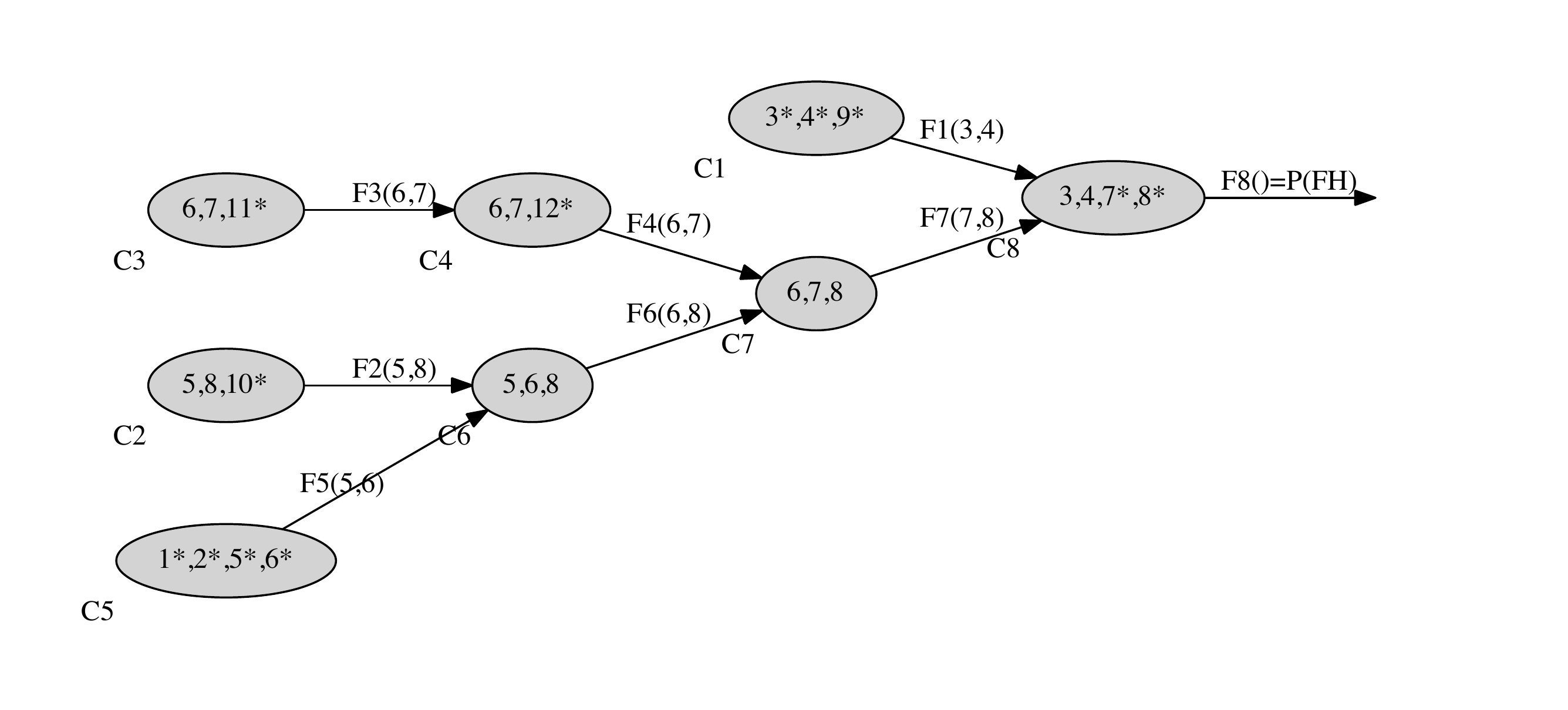}
\end{center}
\caption{Junction tree of our hypothetical family with the following elimination order:
$X_9,X_{10},X_{11},X_{12},X_{1,2},X_{5},X_6,X_{3,4,7,8}$.}\label{fig:jt}
\end{figure}

As an example, we consider the pedigree of Fig.~\ref{fig:fam4} and want to compute $\P(\mathrm{FH})$ by successive variable elimination. We use the following elimination order: $X_9$, $X_{10}$, $X_{11}$, $X_{12}$, $X_{1,2}$, $X_{5}$, $X_6$, and $X_{3,4,7,8}$. Here follow the quantities obtained in the process:
$$
F_1(X_{3,4})=\sum_{X_9} K_3(X_3) K_4(X_4) K_9(X_{3,4,9});
\  
F_2(X_{5,8})=\sum_{X_{10}} K_{10}(X_{5,8,10});
$$
$$
F_3(X_{6,7})=\sum_{X_{11}} K_{11}(X_{6,7,11});
\ 
F_4(X_{6,7})=\sum_{X_{12}} F_3(X_{6,7}) K_{12}(X_{6,7,12});
$$
$$
F_5(X_{5,6})=\sum_{X_{1,2}} K_1(X_1) K_2(X_2) K_5(X_{1,2,5})K_6(X_{1,2,6});
\ 
F_6(X_{6,8})=\sum_{X_5} F_2(X_{5,8}) F_5(X_{5,6});
$$
$$
F_7(X_{7,8})=\sum_{X_6} F_4(X_{6,7}) F_6(X_{6,8});
\ 
\P(\mathrm{FH})=\sum_{X_{3,4,7,8}} F_1(X_{3,4}) F_7(X_{7,8})
K_7(X_{3,4,7})K_8(X_{3,4,8}).
$$
We therefore can obtain $\P(\mathrm{FH})$ by considering only $6\times 4^3 + 2 \times 4^4=896$ configurations over the $4^{12}\simeq 16.8 \times 10^6$ total number of $\boldsymbol X$ configurations. Note that a memory bounded version of the variable elimination exists, see \citet{darwiche2001recursive} for more details.

Fig.~\ref{fig:jt} is a graphical representation of this particular sequence of elimination and is also a \emph{junction tree} defined as a set of $K$ \emph{cliques} $C_1,\ldots,C_K$ with $C_j \subset \{X_1,\ldots,X_n\}$ with the following properties:
\begin{enumerate}[i)]
\item tree: 
each clique $j$ is connected to a a subsequent $clique$
$\mathrm{to}_j  \in \{j+1,\ldots,K\}$ ($\mathrm{to}_K=\text{root}$ by convention). We also define $\mathrm{from}_k=\{j, \mathrm{to}_j=k\}$ ($\mathrm{from}_1=\emptyset$) and $S_j=C_j \cap C_{\mathrm{to}_j}$ (with the convention that $S_K=\emptyset$).
\item covering: for all $i \in \{1,\ldots,n\}$ it exists a $j$ such as  $\{X_i,X_{\mathrm{pa}_i}\} \subset C_j$. We then define $\mathrm{of}_i=\min \{j, (X_i,X_{\mathrm{pa}_i}) \subset C_j\}$ and $C_j^*=\{X_i \in C_j, \mathrm{of}_i=j\}$.
\item running intersection: for all $i \in \{1,\ldots,n\}$ the subgraph formed by  $\{C_j, X_i \in C_j\}$ (and the from/to relationships) is a tree.
\end{enumerate}

In the graph theory, junction trees are used as an auxiliary structure for many applications (\emph{e.g.} graph coloring). The proof that any elimination sequence gives a junction tree can be found in \citet{koller2009probabilistic}. The \emph{tree-width} of an elimination sequence / junction tree is defined as the size of its largest clique. Finding the elimination sequence with the smallest tree-width is NP-hard in general, but many heuristics are available \citep{koller2009probabilistic}. The elimination order of Fig.~\ref{fig:jt} has been obtained using the well-known minimum fill-in heuristic.

\paragraph{Belief Propagation.}

We assume that a suitable elimination order / junction tree has been obtained. For all $j \in \{1,\ldots,K\}$ we hence define the potential of clique $C_j$ as
$\Phi_j(C_j)= \prod_{X_i \in C_j^*} K_i \left(X_i | X_{\mathrm{pa}_i} \right)$ and we have the following result:

\begin{theorem}{(posterior distribution)}\label{thm:post}
For all $i\in\{1,\ldots,n\}$, let $k=\mathrm{of}_i$ and we have:
$$
\P(X_i , \mathrm{FH}) = \sum_{C_k \setminus \{X_i\}} \left\{ 
 \prod_{j \in \mathrm{from}_k}  F_j(S_j) \times \Phi_k(C_k)  \times B_k(S_k)
 \right\}
$$
where the \emph{forward} quantities are defined for $k=1, \dots, K$ by:
$$
F_k(S_k) = \sum_{C_k \setminus S_k}  \left\{ 
 \prod_{j \in \mathrm{from}_k}  F_j(S_j) \times \Phi_k(C_k) 
\right\}
$$
and the \emph{backward} quantities are defined by $B_K(S_K=\emptyset)=1$ (convention) and for $k=K, \dots, 2$, for all $i \in \mathrm{from}_k$:
$$
B_i(S_i) = \sum_{C_k \setminus S_i} \left\{ 
 \prod_{j \in \mathrm{from}_k,j \neq i}  F_j(S_j) \times \Phi_k(C_k)  \times B_k(S_k)
\right\}.
$$
\end{theorem}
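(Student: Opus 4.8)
The plan is to show that the forward and backward recursions accumulate partial sum-products over two complementary parts of the junction tree, and then to glue them at the clique $C_k$ hosting $X_i$. The workhorse throughout is the remark that, by the covering property, the stars $C_j^*$ partition $\{X_1,\dots,X_n\}$, so every factor $K_i$ occurs exactly once (inside $\Phi_{\mathrm{of}_i}$) and the clique potentials multiply back to the full unnormalised joint:
\[
\prod_{k=1}^K \Phi_k(C_k) = \prod_{i=1}^n K_i\!\left(X_i | X_{\mathrm{pa}_i}\right) = \P(\boldsymbol X, \mathrm{FH}).
\]
The whole proof is then a reorganisation of the marginalisation of this identity over $\boldsymbol X\setminus\{X_i\}$.

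First I would fix notation for the structure induced by the $\mathrm{from}/\mathrm{to}$ relation. Writing $D_k=\{k\}\cup\bigcup_{j\in\mathrm{from}_k}D_j$ for the subtree rooted at $k$, I set $V_k=\bigcup_{j\in D_k}C_j$ for its variables, and let $\bar D_k$ and $R_k=\bigcup_{\ell\in\bar D_k}C_\ell$ denote the complementary cliques and variables. The running intersection property supplies the interface facts that license every interchange of $\sum$ and $\prod$: a subtree meets the rest of the tree only through its separator, i.e.\ $V_k\cap R_k=S_k$ and $V_j\cap C_k=S_j$ for $j\in\mathrm{from}_k$ (any shared variable lies on the connecting path, hence in the separating clique). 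Consequently the internal sets $V_j\setminus S_j$ of distinct child branches are pairwise disjoint and disjoint from $C_k$, so that $V_k\setminus S_k$ decomposes as the disjoint union $(C_k\setminus S_k)\sqcup\bigsqcup_{j\in\mathrm{from}_k}(V_j\setminus S_j)$. This set identity is the bookkeeping needed below.

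Next I would establish the forward claim $F_k(S_k)=\sum_{V_k\setminus S_k}\prod_{\ell\in D_k}\Phi_\ell(C_\ell)$ by induction along the tree from the leaves (the ordering $\mathrm{to}_j>j$ makes this well founded). The base case $\mathrm{from}_k=\emptyset$ is the definition of $F_k$; for the inductive step I substitute $F_j(S_j)=\sum_{V_j\setminus S_j}\prod_{\ell\in D_j}\Phi_\ell$ into the defining recursion and use the disjointness above to merge the nested branch sums and the sum over $C_k\setminus S_k$ into a single sum over $V_k\setminus S_k$, with product $\prod_{\ell\in D_k}\Phi_\ell$. The backward claim $B_i(S_i)=\sum_{R_i\setminus S_i}\prod_{\ell\in\bar D_i}\Phi_\ell(C_\ell)$ follows symmetrically by reverse induction from the root, with base case $B_K(\emptyset)=1$ since $\bar D_K=\emptyset$. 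Here, for a child $i\in\mathrm{from}_k$, the complement splits as $\bar D_i=\{k\}\cup\bigcup_{j\in\mathrm{from}_k,\,j\neq i}D_j\cup\bar D_k$; collapsing the sibling subtrees by the forward claim and the $\bar D_k$ part by the backward hypothesis reproduces exactly the stated recursion for $B_i$.

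Finally I would combine the two at $k=\mathrm{of}_i$. Partitioning the cliques as $\{k\}\sqcup\bigsqcup_{j\in\mathrm{from}_k}D_j\sqcup\bar D_k$ and, correspondingly, the variables as $C_k\sqcup\bigsqcup_{j\in\mathrm{from}_k}(V_j\setminus S_j)\sqcup(R_k\setminus S_k)$, I marginalise $\P(\boldsymbol X,\mathrm{FH})=\prod_\ell\Phi_\ell$ over $\boldsymbol X\setminus\{X_i\}$, noting that $X_i\in C_k$ because $\mathrm{of}_i=k$. The inner sums over each $V_j\setminus S_j$ and over $R_k\setminus S_k$ collapse, by the forward and backward claims, to $F_j(S_j)$ and $B_k(S_k)$, leaving precisely $\sum_{C_k\setminus\{X_i\}}\{\prod_{j\in\mathrm{from}_k}F_j(S_j)\,\Phi_k(C_k)\,B_k(S_k)\}$, as asserted. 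The main obstacle is not any single computation but the interface bookkeeping of the second paragraph: extracting from the running intersection property the disjointness and separator identities that justify each factorisation of the sums is the crux, after which the two inductions and the final gluing are routine.
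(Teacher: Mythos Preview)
Your proposal is correct and follows essentially the same route as the paper: both arguments introduce the subtree rooted at a clique (your $D_k$, the paper's $u_k$) and its variable set, use the running intersection property to obtain the disjoint-union decomposition $V_k\setminus S_k=(C_k\setminus S_k)\sqcup\bigsqcup_{j\in\mathrm{from}_k}(V_j\setminus S_j)$, identify the forward/backward quantities with partial sum-products over the two complementary pieces of the tree, and then glue them at $C_k$ to obtain $\P(C_k,\mathrm{FH})$. The only organizational difference is direction: the paper first \emph{defines} $F_j$ and $B_j$ by their closed-form partial sum-product expressions, proves $\P(S_j,\mathrm{FH})=F_jB_j$ and $\P(C_k,\mathrm{FH})=\Phi_k\prod_j F_j\,B_k$ directly from the junction-tree decomposition, and then reads off the recursions; you instead start from the recursive definitions and prove by induction that they coincide with the closed forms --- the same content traversed in the opposite order.
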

\begin{proof}
See Appendix~\ref{appendix:bp}.
\end{proof}

Using Theorem~\ref{thm:post}, it is therefore possible to obtain $\P(\mathrm{FH})$ and \emph{all} $\P(X_i | \mathrm{FH})$ by just recursively computing once all forward and backward quantities.

\subsection{Disease Risk}\label{sec:risk}

While the previous section covered the computation of the posterior probability $\P(X_i | \mathrm{FH})$ for all individuals in the pedigree, we now focus in this section on computing individual posterior disease risks, with or without the competing risk of death.

\paragraph{Risk without competing events.}

We consider an individual $i$ with a posterior carrier probability $\pi$ at age $\tau$, that is $\pi=\P(X_i \neq 00 | \mathrm{FH},{T}_i>\tau)$. Conditionally to the family history, we denote the survival and hazard functions respectively by $S$ and $\lambda$ such that, for $t\geq \tau$, $S(t)=\P({T}_i>t|\mathrm{FH},{T}_i>\tau)$ and $S(t) = \exp(-\int_{\tau}^t \lambda(u)du)$. We have the following result.

\begin{theorem}\label{th:survnocr}
For any $t \geq \tau$, we have:
$$
S(t)= \pi \frac{S_1(t)}{S_1(\tau)} + (1-\pi) \frac{S_0(t)}{S_0(\tau)}
$$
$$
\P(X_i \neq 00 | \mathrm{FH}, {T}_i>t)= \frac{1}{S(t)} \pi \frac{S_1(t)}{S_1(\tau)}
$$
\begin{equation}\label{eq:posthaz}
\lambda(t) = \frac{1}{S(t)} \left[ \pi \frac{S_1(t)}{S_1(\tau)} \lambda_1(t) + (1-\pi) \frac{S_0(t)}{S_0(\tau)} \lambda_0(t) \right]
\end{equation}
\end{theorem}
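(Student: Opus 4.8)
The plan is to reduce all three identities to a single conditioning argument on the unobserved genotype $X_i$, exploiting the conditional independence encoded by the Bayesian network of Eq.~(\ref{eq:bn}). The decisive structural fact is that in the survival part each $T_i$ is a child of $X_i$ alone and is a leaf of the graph, so that conditionally on $X_i$ the variable $T_i$ is independent of all remaining $X_j,T_j$ ($j\neq i$) and of every genetic constraint not bearing on $T_i$ itself. Concretely, for $x\in\{00,01,10,11\}$ and $t\geq\tau$,
$$
\P(T_i>t | \mathrm{FH}, T_i>\tau, X_i=x)=\P(T_i>t | T_i>\tau, X_i=x),
$$
and, since $\{T_i>t\}\subset\{T_i>\tau\}$, the right-hand side is the ratio of the genotype-specific survivals of Eq.~(\ref{eq:survpart}), namely $S_0(t)/S_0(\tau)$ when $x=00$ and $S_1(t)/S_1(\tau)$ otherwise.

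For the first identity I would apply the law of total probability to $S(t)=\P(T_i>t | \mathrm{FH},T_i>\tau)$, splitting according to $\{X_i=00\}$ and $\{X_i\neq 00\}$. The two posterior weights are exactly $1-\pi$ and $\pi$ by the very definition of $\pi$, and the two conditional survivals collapse to $S_0(t)/S_0(\tau)$ and $S_1(t)/S_1(\tau)$ by the independence displayed above, yielding $S(t)=\pi\,S_1(t)/S_1(\tau)+(1-\pi)\,S_0(t)/S_0(\tau)$. The second identity is then a single Bayes step: writing
$$
\P(X_i\neq 00 | \mathrm{FH},T_i>t)=\frac{\P(X_i\neq 00,\,T_i>t | \mathrm{FH},T_i>\tau)}{\P(T_i>t | \mathrm{FH},T_i>\tau)},
$$
the denominator is $S(t)$ and the numerator factorizes as $\pi\,S_1(t)/S_1(\tau)$ by the same argument, giving the stated expression.

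For the hazard I would use $\lambda(t)=-S'(t)/S(t)$, which is immediate from $S(t)=\exp(-\int_\tau^t\lambda)$, differentiate the closed form of $S$ obtained in the first step, and substitute $S_k'(t)=-\lambda_k(t)S_k(t)$ for $k\in\{0,1\}$ (a direct consequence of $S_k=\exp(-\Lambda_k)$ with $\Lambda_k'=\lambda_k$). Collecting terms reproduces Eq.~(\ref{eq:posthaz}) after dividing by $S(t)$; no separate argument is needed, so the three claims really are one computation followed by two one-line deductions.

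The only genuinely delicate point is the justification of the conditional independence display: it must be read off the directed acyclic structure of Eq.~(\ref{eq:bn}), where conditioning on $X_i$ blocks every path joining the leaf $T_i$ to the rest of the pedigree (each such path meets the observed non-collider $X_i$). One must also check that the personal history $\mathrm{PH}_i$ already contained in $\mathrm{FH}$ is precisely $\{T_i>\tau\}$, so that the explicitly added event $\{T_i>\tau\}$ introduces no double-conditioning inconsistency and the only part of $\mathrm{FH}$ referencing $T_i$ is retained in the reduced conditional. Once this is secured, everything else is routine bookkeeping.
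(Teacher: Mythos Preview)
Your proposal is correct and, for the first two identities, proceeds exactly as the paper does: total probability over $\{X_i=00\}$ versus $\{X_i\neq 00\}$ combined with the conditional independence $\P(T_i>t\mid X_i,\mathrm{FH})=\P(T_i>t\mid X_i)$, followed by a one-line Bayes step.

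The one point of divergence is the hazard identity~(\ref{eq:posthaz}). You obtain it analytically, via $\lambda(t)=-S'(t)/S(t)$ and termwise differentiation of the mixture formula for $S$ using $S_k'=-\lambda_k S_k$. The paper instead returns to the probabilistic definition $\lambda(t)=\lim_{\Delta t\to 0}\P(t\le T_i<t+\Delta t\mid T_i\ge t,\mathrm{FH})/\Delta t$, decomposes the numerator over $X_i$, and identifies the limits as $\lambda_0,\lambda_1$ weighted by the time-$t$ posterior carrier probabilities just computed. Your route is shorter and purely calculus once $S(t)$ is in hand; the paper's route is slightly longer but makes the interpretation of $\lambda$ as a posterior-weighted mixture of the genotype-specific hazards explicit without passing through derivatives (and does not require any smoothness of $S_0,S_1$ beyond what is needed to define $\lambda_0,\lambda_1$). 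Either derivation is entirely adequate here.
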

\begin{proof}
See Appendix~\ref{appendix:risk}.
\end{proof}

\paragraph{Risk with death as a competing event.}


As explained in the introduction, death precludes the occurence of the disease. This needs to be taken into account by defining 
the hazard rate of the disease conditionally to the fact that both disease and death have not occurred yet. From a statistical point of view, such a situation can be seen as a competing risk situation or as an illness-death model;  see \cite{ABGK} or \cite{andersen2012} for a presentation of such models. We define $T^*$ as the minimum between age at disease onset and age at death and we keep the notation ${T}$ to denote the age at disease onset. Given an individual $i$ with a family history $\mathrm{FH}$, its hazard rate for the disease is defined as
\[\lambda_{\alpha}(t)=\lim_{\Delta t\to 0}\frac{\P(t\leq {T}_i<t+\Delta t | T_i^*\geq t,\mathrm{FH})}{\Delta t}\]
We denote by $\lambda_\beta$ and $S_{\beta}$ the hazard and survival functions of $T_i^*$ (conditionally to the family history) and we assume that $\lambda_\alpha$ and $\lambda_\beta$ are piecewise constants with common cuts $\tau=c_0<c_1<\ldots<c_N$ (that is $\lambda_\alpha(t)=\alpha_j$ and $\lambda_\beta(t)=\beta_j$ for $t  \in ]c_{j-1},c_j]$). 


\begin{lemma}\label{th:survcr} 
For $j=1,\ldots,N$, $t  \in ]c_{j-1},c_j]$, we have
\rev{$$
\P(T_i\leq t| T_i > c_{j-1}, \mathrm{FH})=\int_{c_{j-1}}^{t} \lambda_\alpha(u) S_\beta(u) du = \frac{\alpha_j}{\beta_j} \left[ S_\beta(c_{j-1}) - S_\beta(t) \right]
$$}
\end{lemma}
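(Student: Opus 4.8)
The plan is to read the left-hand side as the increment of the disease cumulative incidence function over the interval $(c_{j-1},t]$, and to prove the two equalities separately: the first is the general competing-risks sub-density identity, and the second is the explicit integration of a piecewise-constant hazard. The whole computation lives in the world conditioned on $\mathrm{FH}$ (and on reaching the origin $\tau=c_0$, so that $S_\beta(c_0)=1$); I will keep this conditioning implicit and write $\P(\cdot)$ for the corresponding probabilities and densities, exactly as the paper does.

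First I would establish the sub-density. The set of individuals still ``at risk'' for disease onset just before age $u$ is $\{T_i^*\ge u\}$, whose probability is $S_\beta(u)$ by definition of $S_\beta$ (continuity of $T_i^*$ makes $\{T_i^*\ge u\}$ and $\{T_i^*>u\}$ interchangeable). Writing $f_\alpha(u)$ for the sub-density of the disease-onset time, i.e. the unconditional instantaneous rate of a disease event, necessarily before death, at age $u$, the very definition of the cause-specific hazard $\lambda_\alpha$ gives $\lambda_\alpha(u)=f_\alpha(u)/S_\beta(u)$, hence $f_\alpha(u)=\lambda_\alpha(u)\,S_\beta(u)$. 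Concretely this follows by factorising $\P(u\le T_i<u+\Delta t)=\P(T_i^*\ge u)\,\P(u\le T_i<u+\Delta t\mid T_i^*\ge u)$, dividing by $\Delta t$ and letting $\Delta t\to 0$. Integrating the sub-density over the interval then yields the first equality,
\[
\P(c_{j-1}<T_i\le t)=\int_{c_{j-1}}^{t} f_\alpha(u)\,du=\int_{c_{j-1}}^{t}\lambda_\alpha(u)\,S_\beta(u)\,du .
\]

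It then remains to evaluate the integral using the piecewise-constant assumption. For $t\in(c_{j-1},c_j]$ we have $\lambda_\alpha(u)=\alpha_j$ and, since $\lambda_\beta(u)=\beta_j$ on the same interval, $S_\beta(u)=S_\beta(c_{j-1})\exp(-\beta_j(u-c_{j-1}))$ for $u\in(c_{j-1},t]$. Substituting and integrating the exponential gives $\frac{\alpha_j}{\beta_j}S_\beta(c_{j-1})\left(1-e^{-\beta_j(t-c_{j-1})}\right)$, and recognising $S_\beta(c_{j-1})e^{-\beta_j(t-c_{j-1})}=S_\beta(t)$ turns this into $\frac{\alpha_j}{\beta_j}\left[S_\beta(c_{j-1})-S_\beta(t)\right]$, which is the second equality.

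I expect the only real obstacle to be conceptual rather than computational: justifying the sub-density identity $f_\alpha=\lambda_\alpha S_\beta$ and, relatedly, being explicit about what the conditional-looking notation $\P(T_i\le t\mid T_i>c_{j-1},\mathrm{FH})$ means. The subtlety is that this is the \emph{unnormalised} increment of the cumulative incidence: the at-risk factor is the overall survival $S_\beta$ of $T_i^*=\min(\text{disease},\text{death})$, not the marginal disease survival, so the expression must not be divided by $\P(T_i>c_{j-1})$. The presence of the bare $S_\beta(c_{j-1})$ on the right-hand side is precisely the signature of this. Once this point is settled, the integration step is entirely routine.
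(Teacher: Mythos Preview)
Your proof is correct and follows essentially the same route as the paper: first identify $\lambda_\alpha(u)S_\beta(u)$ as the sub-density of $T_i$ (conditionally on $\mathrm{FH}$) via the definition of the cause-specific hazard, then integrate explicitly using the piecewise-constant form $S_\beta(u)=S_\beta(c_{j-1})e^{-\beta_j(u-c_{j-1})}$ on $(c_{j-1},c_j]$. Your closing remark about the conditional-looking notation is well taken: the paper's own proof and its subsequent ``practical computations'' formula $\P(T_i\le c_k\mid\mathrm{FH})=\sum_{j=1}^k\frac{\alpha_j}{\beta_j}[S_\beta(c_{j-1})-S_\beta(c_j)]$ confirm that the left-hand side of the lemma is indeed the unnormalised increment $\P(c_{j-1}<T_i\le t\mid\mathrm{FH})$ of the cumulative incidence, not a quantity divided by $\P(T_i>c_{j-1}\mid\mathrm{FH})$.
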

\begin{proof}
See Appendix~\ref{appendix:risk}.
\end{proof}


\paragraph{Practical computations.}

We assume that one individual has a carrier probability $\pi$ at age $\tau$ (his age without the disease in the FH). We denote by $\lambda_\text{death}$ his/her hazard of death. Then the posterior disease risk with the competing risk of death can be computed through the following steps:
\begin{enumerate}[1)]
\item choose a fine enough discretization $\tau=c_0<c_1<\ldots<c_N=t_{\mathrm{max}}$ (ex: all $c_j-c_{j-1}=0.1$ year);
\item compute $\alpha_j=\lambda_\alpha(c_j)$ using Eq.~(\ref{eq:posthaz});
\item compute $\beta_j=\alpha_j + \lambda_\text{death}(c_j)$;
\item then the marginal posterior
probability of being diagnosed with the disease before age $c_k$, in the presence of death as a 
competing risk, is given for $k=1,\ldots,N$ by:
$$
\P(T_i \leqslant c_k | \mathrm{FH})=\sum_{j=1}^k \frac{\alpha_j}{\beta_j} \left[ S_\beta(c_{j-1}) - S_\beta(c_{j}) \right].
$$
\end{enumerate}


\section{Results and Discussion}\label{sec:results}

\subsection{The Claus-Easton Model}

In order to illustrate our method, we will use the model of illness and the parameters of the Claus-Easton model developed from the Cancer and Steroid Hormone Study in the 90s \citep{claus1991genetic,easton1993genetic}. 

The Claus-Easton model is a classical genetic model composed of a genotypic part and a phenotypic part with only the family history (FH) as covariate. It assumes an autosomal dominant mode of inheritance, and a piecewise constant hazard rate by steps of 10 years. The penetrance ($F(t)=1-S(t)$) and the density ($f(t)=\lambda(t)S(t)$) are given in Table~2 from \citet{easton1993genetic} for both carriers and non-carriers at ages $25, 35, \ldots, 85$. The hazard rates can therefore be derived from these data using the formula $\lambda(t)=f(t)/(1-F(t))$. The results of these computations are given in Table~\ref{tab:CEhaz}. The frequency of the mutated allele has been estimated at $f=0.0033$ \citep{claus1991genetic}. The death incidences needed in the competing risk section are given in Table~\ref{tab:deathhaz}.

Figure~\ref{fig:haz_surv} presents the incidence and survival for BC (carriers and non-carriers) as well as death. We can notice that the breast cancer incidences in carriers are always much higher than in non-carriers at any age and the relative risk between carriers and non-carriers is especially large ($\mathrm{RR}>50$) before age 40 (see Table~\ref{tab:CEhaz}) but then decreases with aging. We notice that the death incidence stays above the BC incidence for non-carriers at all ages and exceeds even the BC incidence for carriers from age 80. This shows the importance of taking it into consideration especially over a certain age.



\begin{table}
$$
\begin{array}{crrrrrrr}
\hline
& 20-30 & 30-40 & 40-50 & 50-60 & 60-70 & 70-80 &
>80\\
\hline
\text{non carriers} & 2.00 & 26.04 & 112.94 & 139.94 & 235.17
& 232.16 & 232.03\\
\text{carriers} & 168.35 & 1391.49 & 3153.21 & 3222.22 &
3281.25 & 3289.86 & 3286.43\\
\text{relative risk} & 84.17 & 53.44 & 27.92 & 23.03 & 13.95 & 14.17 &
14.16 \\
\hline
\end{array}
$$
\caption{Annual incidence (for 100,000) of breast cancer (BC) for
carriers/non-carriers and relative risks by age (in
years) in the Claus-Easton model.}\label{tab:CEhaz}
\end{table}

\begin{table}
$$
\begin{array}{crrrrrrr}
\hline
20-30 & 30-40 & 40-50 & 50-60 & 60-70 & 70-80 & 80-85\\
23.85375 & 46.86641 & 130.5396 & 308.9539 & 599.914 & 1493.6 & 3845.406\\
\hline
85-90 & 90-95 & 95-99 & 99-100 & 100-101 & 101-102 & 102-103\\
8114.203 & 16400.99 & 27912.22 & 35644 & 38696.22 & 43033.07 & 45647.85\\
\hline
\end{array}
$$
\caption{Annual female death incidence (for 100,000) by age (in years) in the metropolitan French population between 2012 and 2014 \citep{ined}.
}\label{tab:deathhaz}
\end{table}

\begin{figure}
\begin{center}
\begin{tabular}{cc}
\includegraphics[width=0.47\textwidth,trim=0 0 0 0,clip]{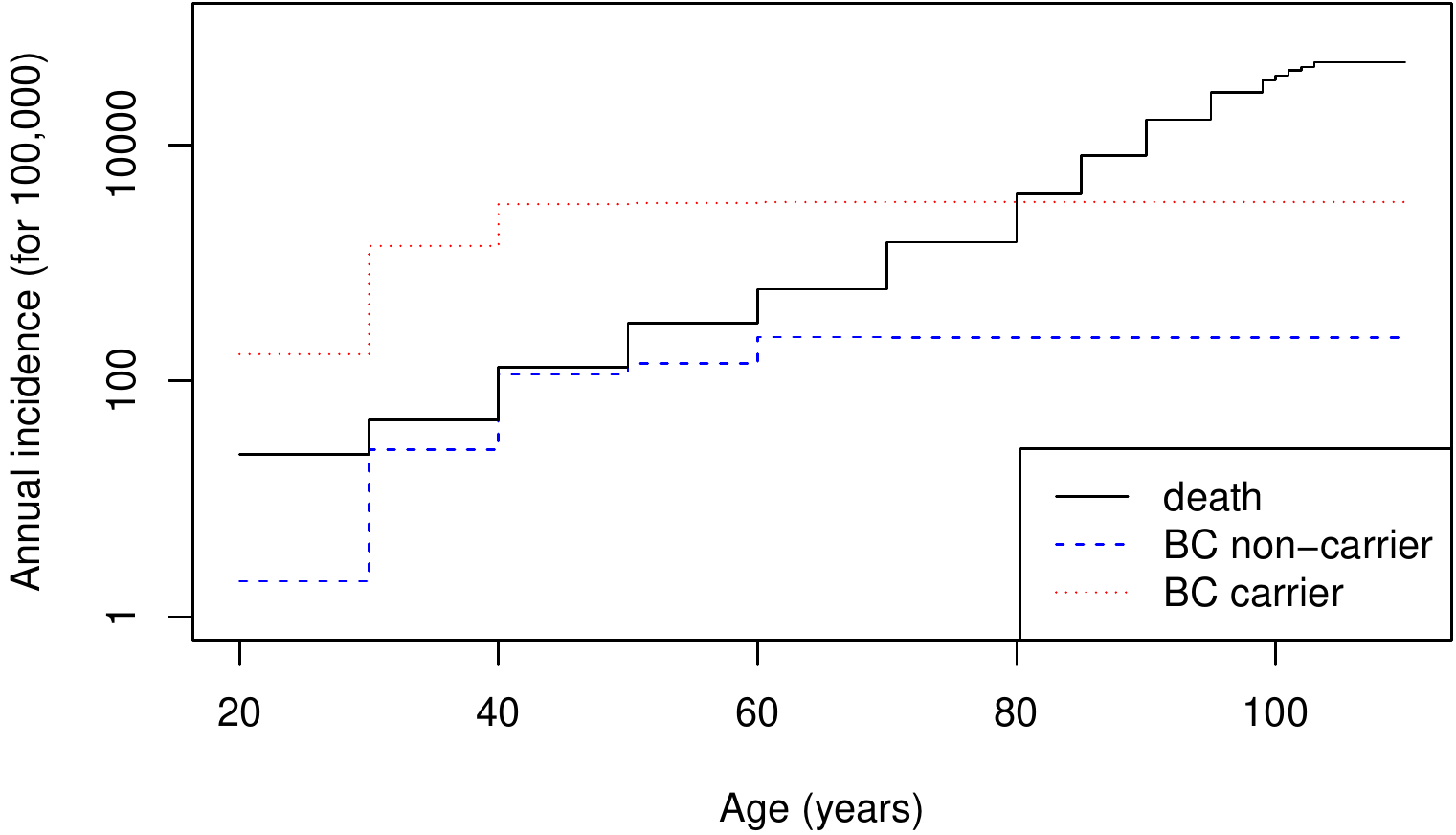}&
\includegraphics[width=0.47\textwidth,trim=0 0 0 0,clip]{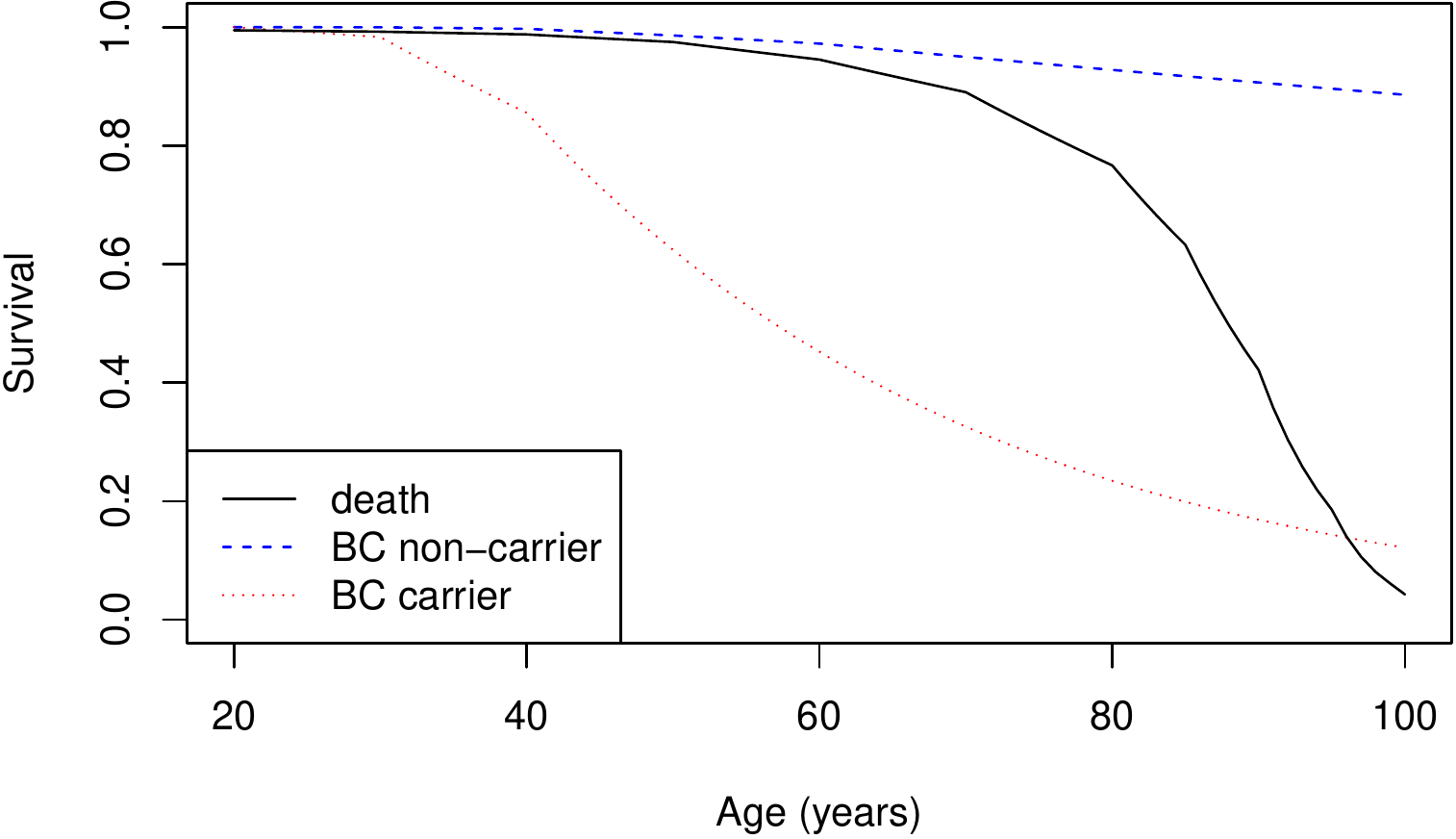}
\end{tabular}
\end{center}
\caption{Left-panel: annual (female) death incidence and annual non-carrier/carrier breast cancer incidence. Right-panel: death survival and percentage of non-carrier/carrier individuals without diagnosed breast cancer.}\label{fig:haz_surv}
\end{figure}

\subsection{Carrier Risk}

In this section we will use the belief propagation in Bayesian networks to obtain the posterior distribution of individual genotypes given the FH. We get the posterior probabilities of each genotype (non-carrier, heterozygous carrier with a paternal mutated allele, heterozygous carrier with a maternal mutated allele and homozygous carrier). 


\begin{figure}
	\begin{center}
		\includegraphics[width=0.8\textwidth,trim=0 0 0 0,clip]{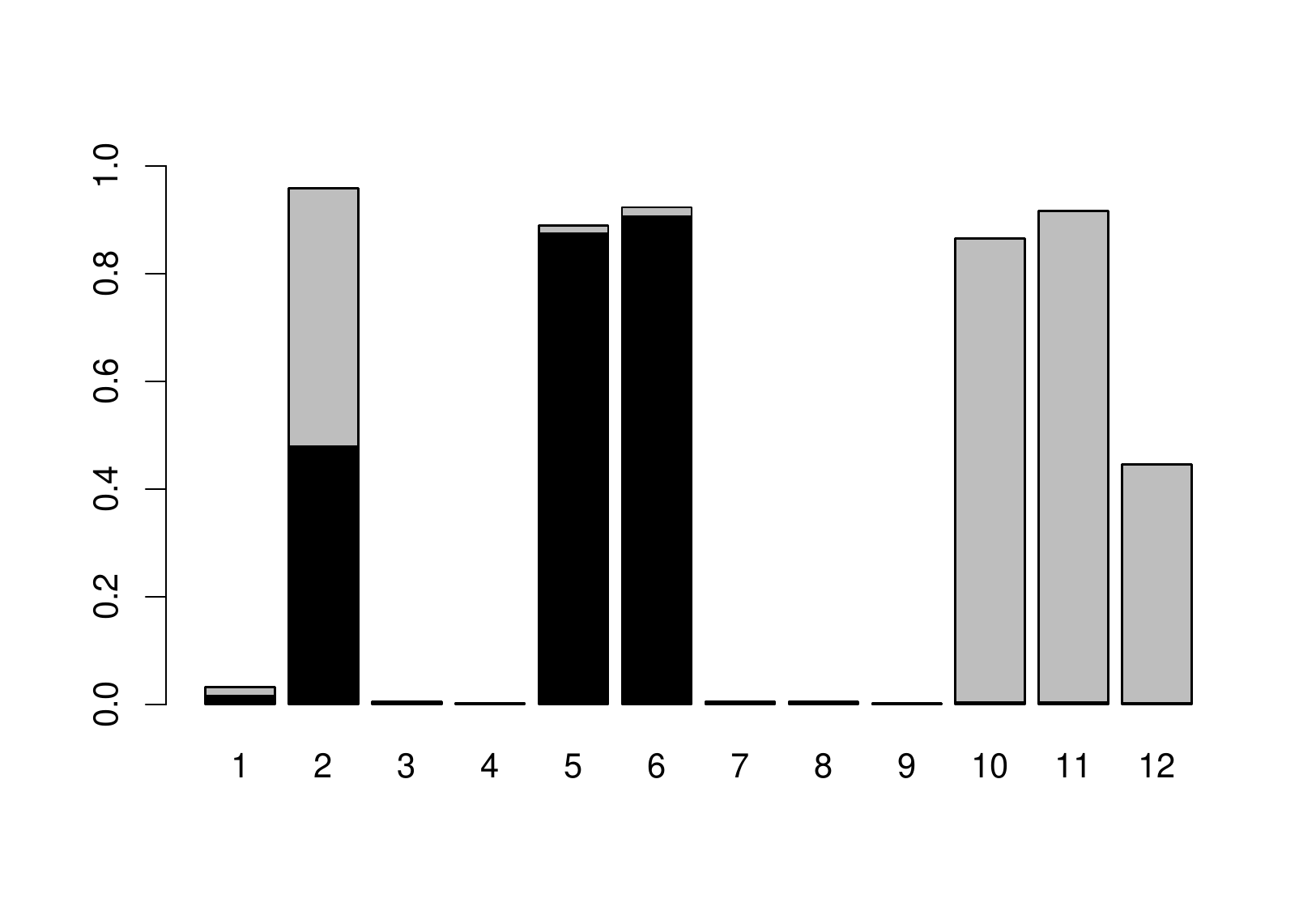}
	\end{center}
	\caption{\alex{Posterior probabilities for the carrier genotypes of each individual (Individual 1 to 12) in our hypothetical family (Figure~\ref{fig:fam4}). The posterior probability of being a paternal carrier $\P(X=10 | \mathrm{FH})$ (resp. maternal carrier $\P(X=01 | \mathrm{FH})$) is colored in black (resp. in grey). The deleterious allele being very rare in the general population ($f=0.33\%$), the probability of the monozygous carrier genotype is almost zero for each individual and it is therefore not represented here.}}\label{fig:PprobFam4}
\end{figure}


\alex{Figure~\ref{fig:PprobFam4} represents the marginal posterior probability $\P(X_i=x | \mathrm{FH})$ for all individuals $i$ and for $x=10$ (paternal carrier) and $x=01$ (maternal carrier). Note that the posterior probability of the monozygous carrier genotype ($x=11$) being almost zero for each individual, it is not shown here. The posterior probability of the non-carrier genotype can be easily deduced.}

We can notice that the probabilities of being a non-carrier for 1, 3, 4, 7, 8 and 9 are all by far the highest despite the severe phenotype of relatives (granddaughter, niece or daughter). This result is consistent with the personal history of Individual 2 (ovarian cancer at age 51) which points her out as the most likely origin of the mutation in the family. Let us note that since we have no additional information on the ancestors of Individual 2, it is impossible to determine whether her mutation was transmitted by her father or her mother. As a consequence, the posterior carrier probability is equally shared between the paternal and maternal carrier genotypes.

Considering the severe personal history of cancer of Individuals 10 and 11, the most likely situation would be that they both received the mutation of their grandmother through their respective fathers (Individuals 6 and 5 respectively). The posterior probabilities are clearly consistent with this scenario: Individuals 5 and 6 have a probability of $\simeq 90\%$ to be maternal carriers, and Individuals 10 and 11 have similar probabilities to be paternal carriers. Note that Individual 12, being unaffected at age 37 (which is not very informative) basically have $50\%$ chance to have received the mutation from her father. 


\begin{figure}
\begin{center}
{\setlength{\tabcolsep}{20pt}
\begin{tabular}{ccc}
\includegraphics[width=0.2\textwidth,trim=0 0 0 50,clip]{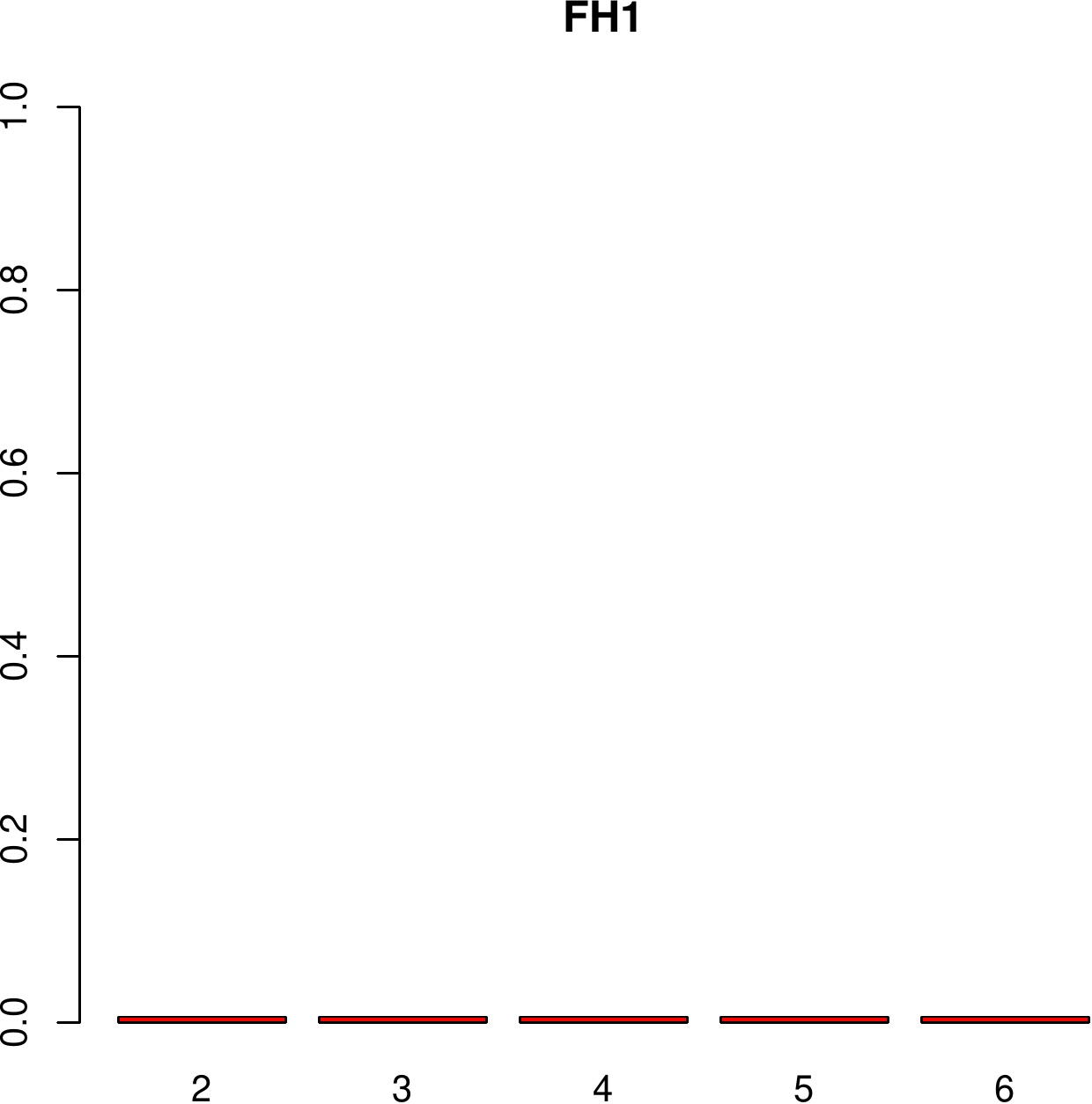}&
\includegraphics[width=0.2\textwidth,trim=0 0 0 50,clip]{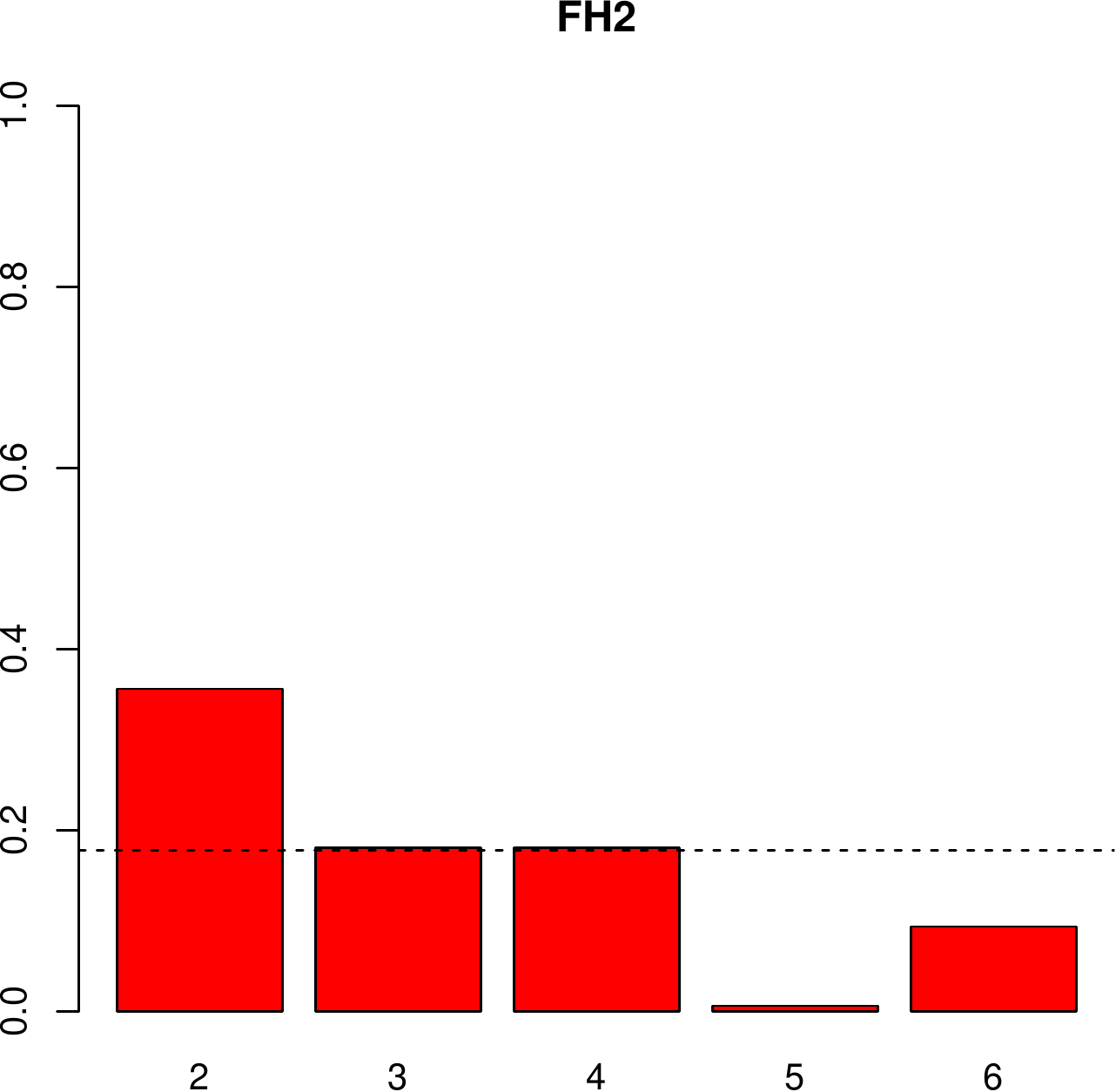}&
\includegraphics[width=0.2\textwidth,trim=0 0 0 50,clip]{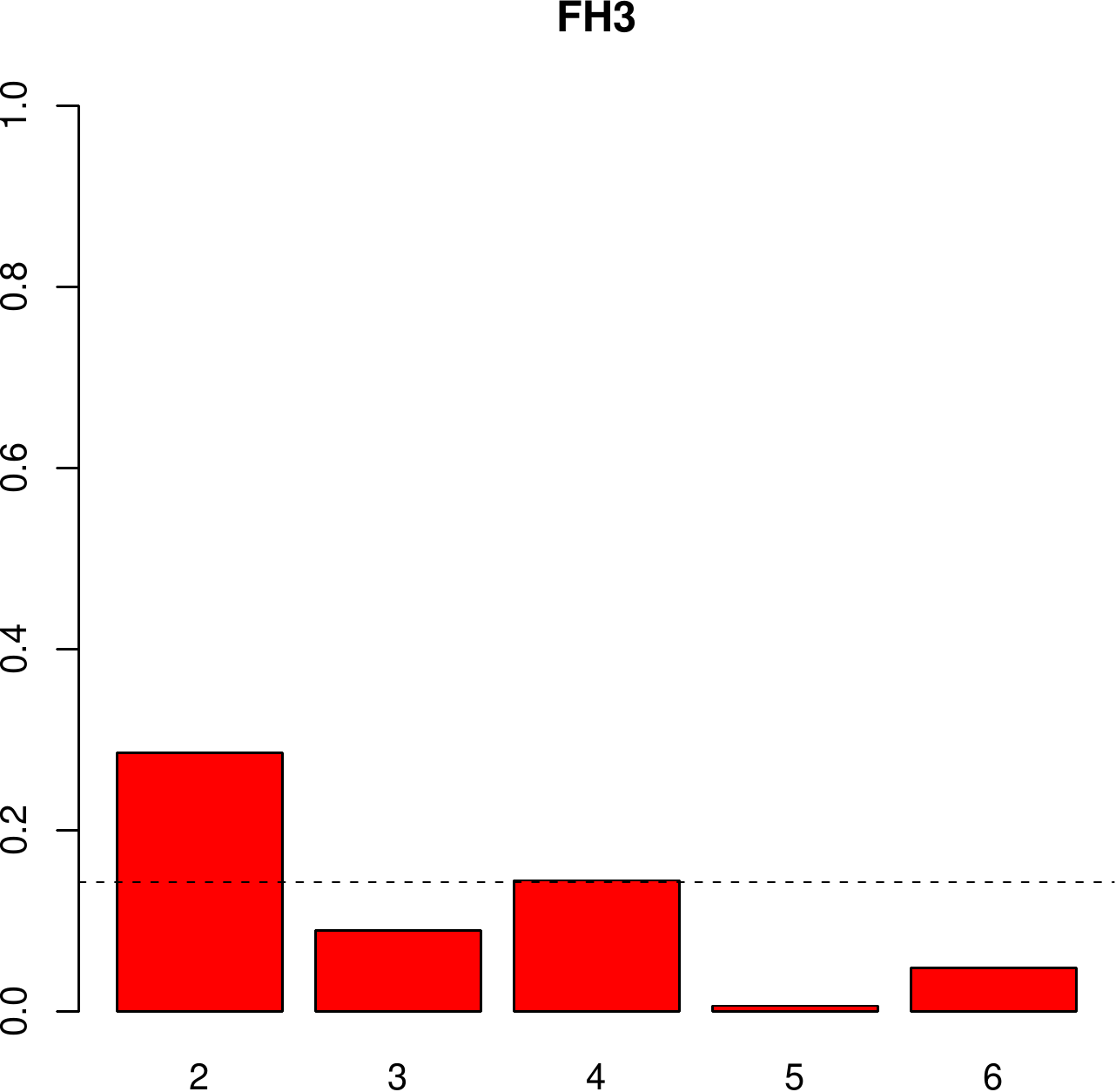}\\
\includegraphics[width=0.2\textwidth,trim=30 610 360 50,clip]{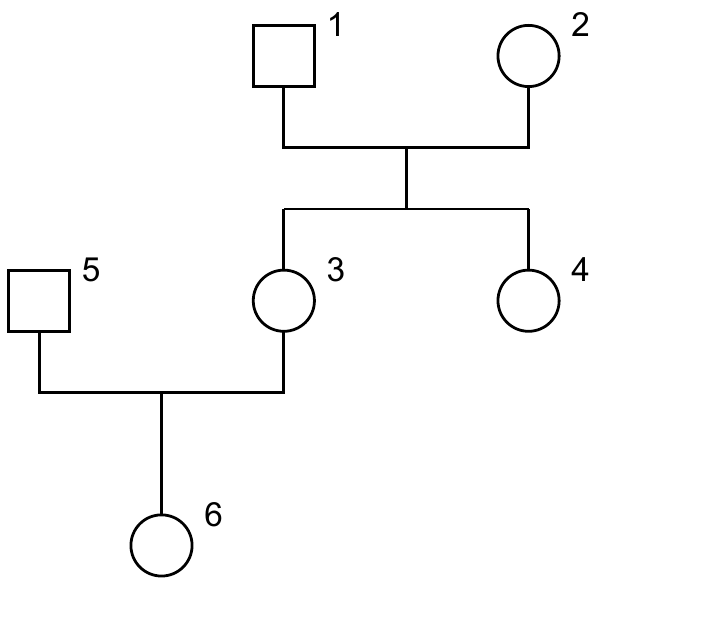}&
\includegraphics[width=0.2\textwidth,trim=30 610 360 50,clip]{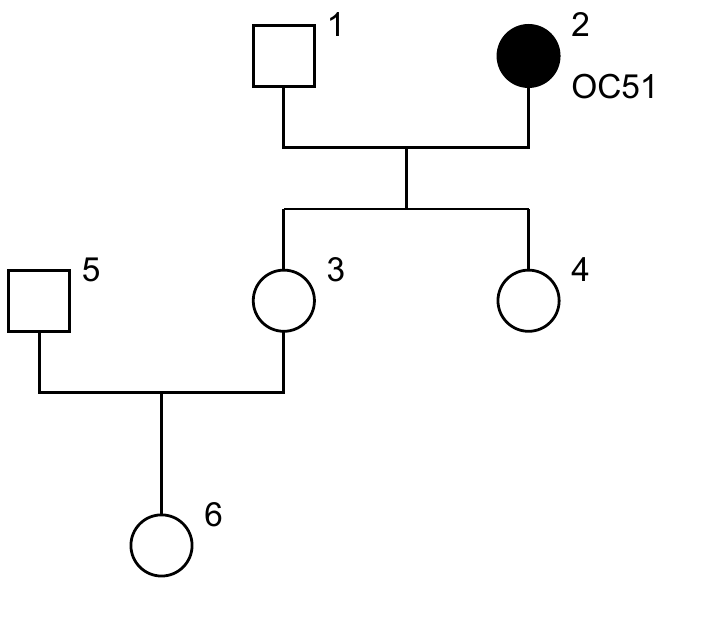}&
\includegraphics[width=0.2\textwidth,trim=30 610 360 50,clip]{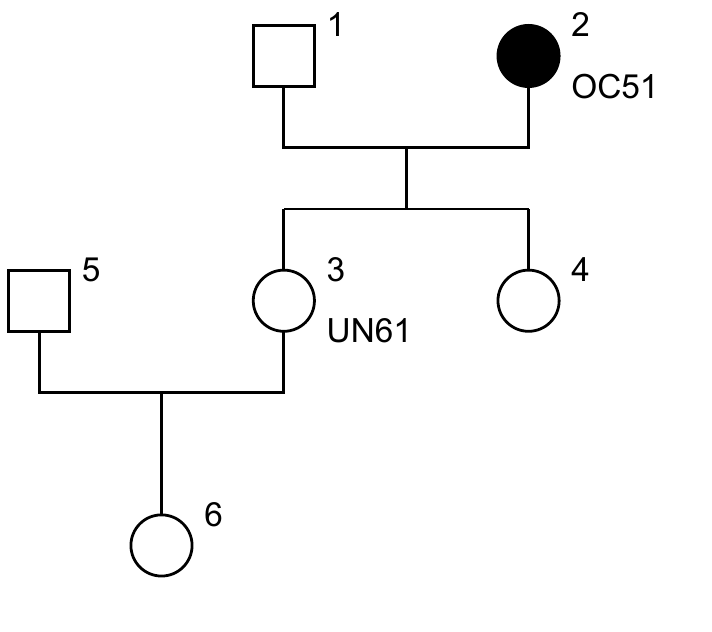}\\
FH1 & FH2 & FH3 \\
 \\
\includegraphics[width=0.2\textwidth,trim=0 0 0 50,clip]{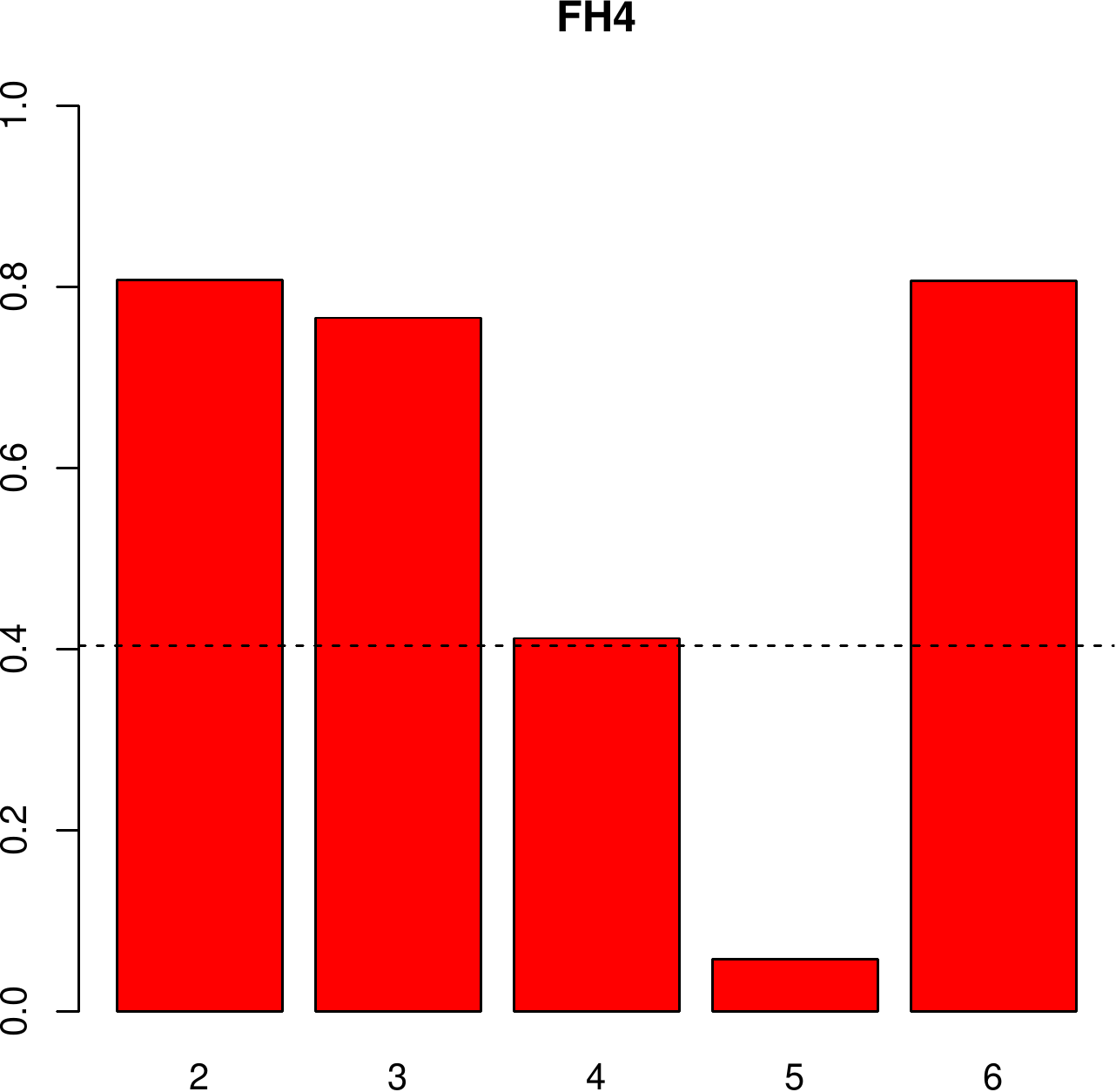}&
\includegraphics[width=0.2\textwidth,trim=0 0 0 50,clip]{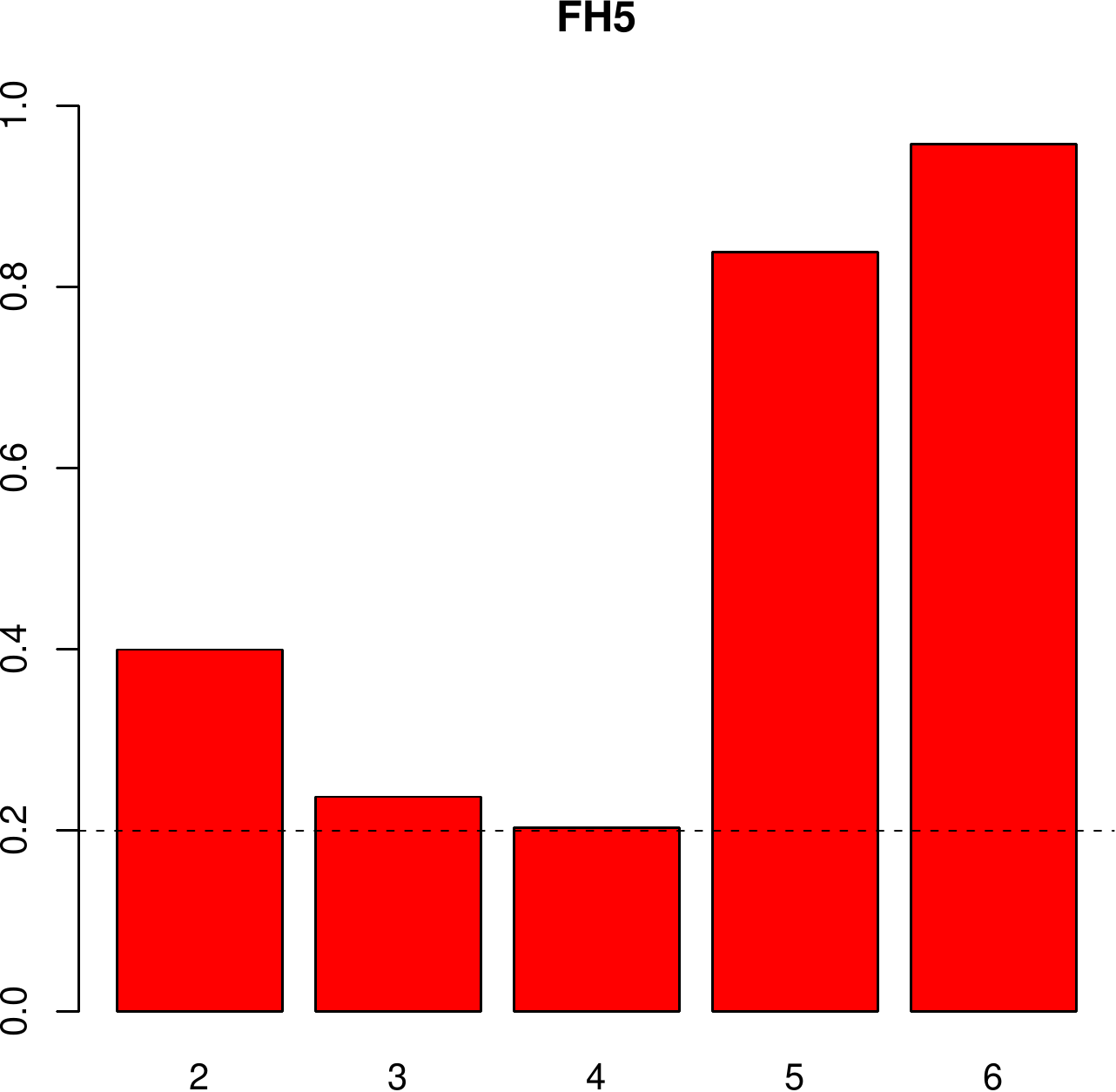}&
\includegraphics[width=0.2\textwidth,trim=0 0 0 50,clip]{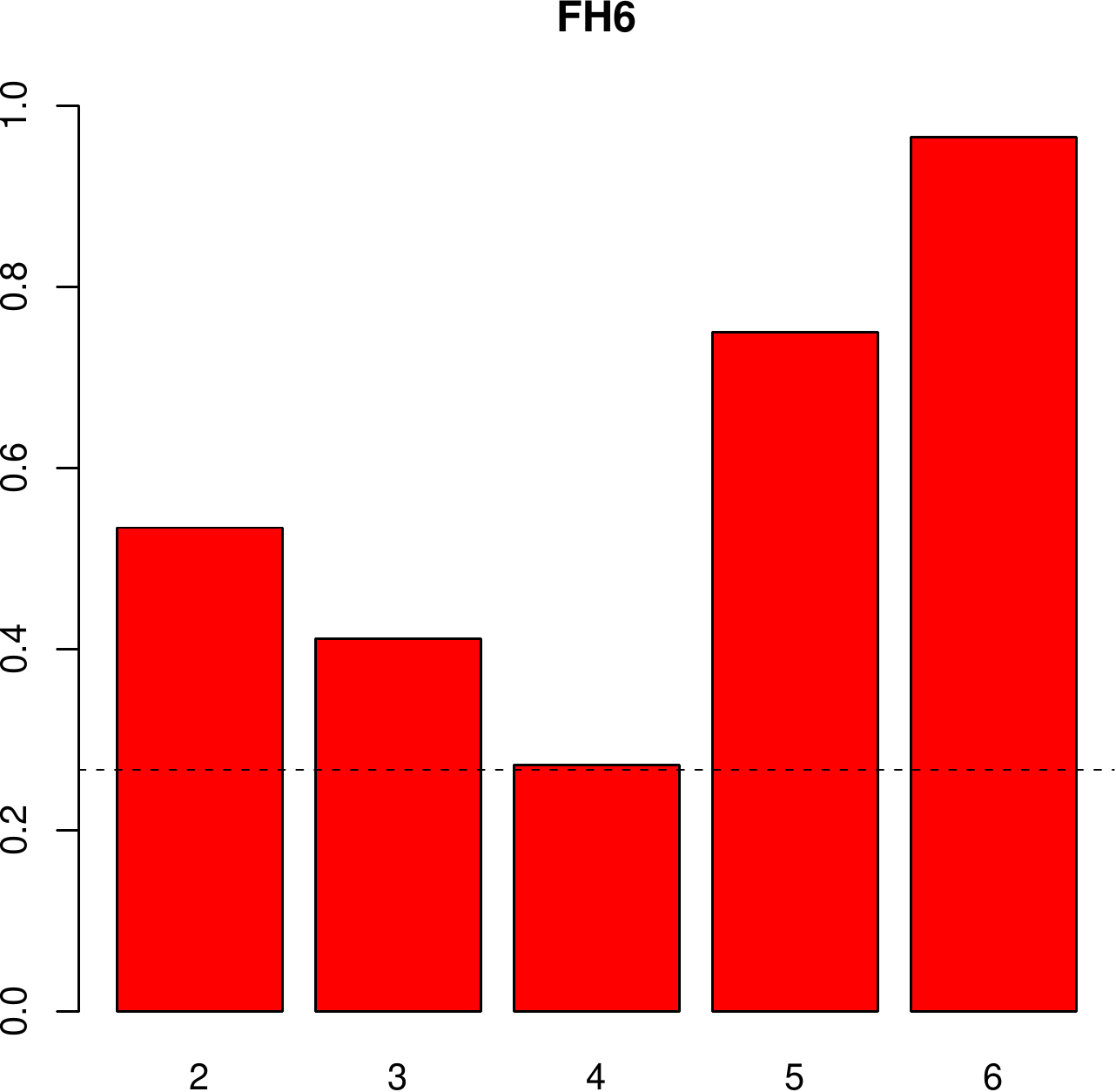}\\
\includegraphics[width=0.2\textwidth,trim=30 610 360 50,clip]{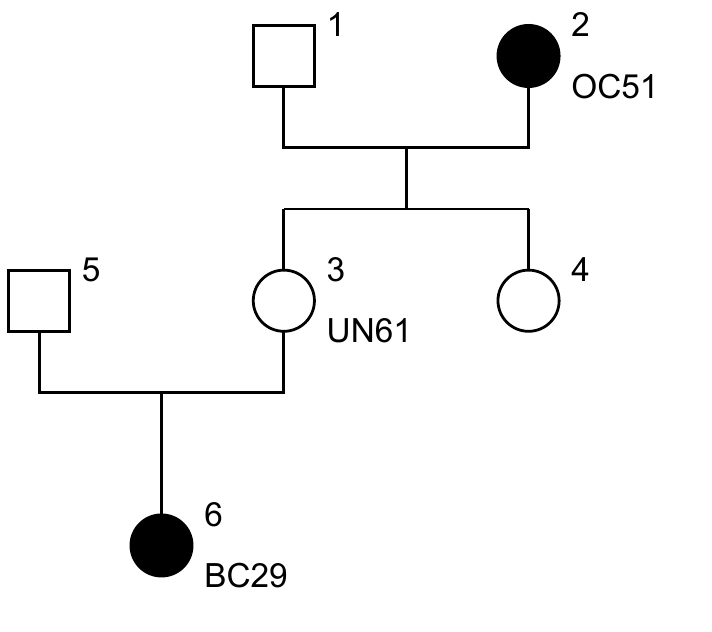}&
\includegraphics[width=0.2\textwidth,trim=30 610 360 50,clip]{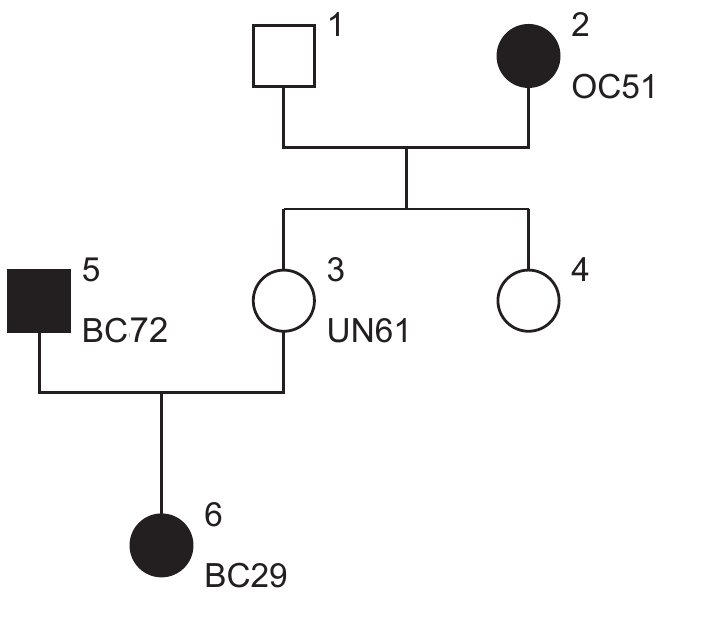}&
\includegraphics[width=0.2\textwidth,trim=30 610 360 50,clip]{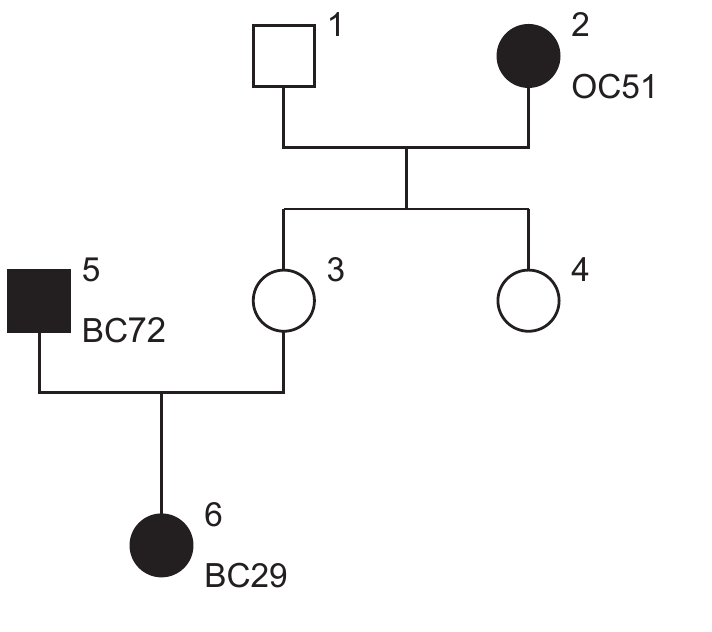}\\
FH4 & FH5 & FH6 \\
\end{tabular}
}
\end{center}
\caption{Posterior marginal carrier distribution for a total of 6 FH with increasing degree of severity on the same pedigree structure with 6 individuals. Dashed line represent half the marginal carrier probability of Individual~2.}\label{fig:fh}
\end{figure}

Figure~\ref{fig:fh} shows some examples of the variations of the posterior marginal distribution of the genotypes in a same family structure according to different FH. We first notice that with no information (FH1) the posterior probabilities are exactly those of the general population: $\P(X_i \neq 00 | \mathrm{FH1})= 1-(1-f)^2 \simeq 0.0066$.

Note that Individual 2 has a severe personal history of cancer (ovarian cancer at age 51) in all other examples. As a consequence, Individual 1, as a male with no personal history of cancer, is mostly totally uninformative therefore not included in the forthcoming analyses.

Individual 4 having no children, she is independent from the rest of the family conditionally to her phenotype and her parent's genotype. With no information about her phenotype in any FH, her probability of being a carrier is therefore almost half her mother's one in each FH (because her father is almost uninformative). 
If we compare the posterior distribution of the genotype of Individual 3 in FH2, FH3 and FH4, we can notice that the ovarian cancer of her mother which increased her mother's probability of being a carrier raises her probability of being a carrier (FH2). A protective information about her phenotype such as no cancer until age 61 lowers her posterior probability of being a carrier (FH3). On the contrary, the cancer at young age of her daughter which increases her daughter's probability of being a carrier raises her own probability of being a carrier (FH4-6).

We also notice the causal relationships in a whole branch of the family with the transmission between Individuals 2, 3 and 6 of the deleterious allele being highly probable which raises the probability of being a carrier for Individual 3 even in the presence of a protective phenotype (unaffected at age 61) in FH4.

We finally observe the influence of the spouse's genotype when having children (FH5).
The higher risk of being a carrier for Individual 5 (because of his cancer at age 72) strongly decreases the carrier probability of his spouse (in comparison with FH4) since the paternal origin of the disease mutation naturally becomes the most likely event.
On the other side, the increase of risk for Individual 3 when suppressing her protective phenotype (FH6) also has a consequence on the marginal posterior distribution of her spouse in lowering his probability of being a carrier as his participation in the risk for their daughter is lowered. 

To summarize, one's probability of being a carrier mainly depends on: 1) one's probability of having at least one carrier parent, which is correlated to the history of cancer of one's ancestors; 2) one's probability of having transmitted the mutation to one's offspring which is correlated to the history of cancer of one's descendant relatives and one's spouse probability of being a carrier.

Remark: As introduced in the ``\nameref{sec:risk}'' section, we know that posterior carrier probabilities should decrease with time for unaffected individuals. For example, if we assume that Individual 4 is unaffected at age 40 in FH6, her probability of being a carrier is 24\%. If she stays unaffected up to age 60 (resp. age 80), her probability of being a carrier decreases to 15\% (resp. 8.5\%). 

\begin{table}
	\centering
	\begin{tabular}{lrrrrr}
		\hline
		$X_{2} / X_6$ & NC/NC & NC/C & C/NC & C/C\\
		\hline
		FH4\\
		marginal & 0.0371306 & 0.1551811 & 0.1559446 & 0.6517438\\
		joint  & 0.1443102 & 0.0480015 & 0.0487650 & 0.7589233\\
		\hline
		FH4 	and $X_3=10$\\
		marginal & 0.0092840 & 0.7741949 & 0.0025657 & 0.2139554\\
		joint & 0.0092840 & 0.7741949 & 0.0025657 & 0.2139554\\
		\hline
		FH4 and $X_3=01$\\
		marginal & 0.0000000 & 0.0000000 & 0.0118497 & 0.9881503\\
		joint & 0.0000000 & 0.0000000 & 0.0118497 & 0.9881503\\
		\hline
		FH4 and $X_3=11$\\
		marginal & 0.0000000 & 0.0000000 & 0.0000000 & 1.0000000\\
		joint & 0.0000000 & 0.0000000 & 0.0000000 & 1.0000000\\
		\hline
	\end{tabular}
	\caption{product of the posterior marginal probabilities $\mathbb{P}(X_2 | \mathrm{FH}) \mathbb{P}(X_6 | \mathrm{FH})$ and joint posterior probability $\mathbb{P}(X_2,X_6 | \mathrm{FH})$ in the context of known and unknown $X_3$. NC: non-carrier; C: carrier.}\label{tab:marginjoint}
\end{table}

Table~\ref{tab:marginjoint} gives a practical illustration of the dependence and conditional independence in a trio grandparent - parent - child. We compare the posterior joint distribution and the product of the posterior marginal distributions of genotypes $X_2$ and $X_6$ in FH4 with various information on $X_3$. We can see that these two quantities are not equal when $X_3$  is not observed while they are exactly the same when $X_3$ is fixed.  This example demonstrates how $X_2$ and $X_6$ are not conditionally independent given FH but they are, conditionally to FH and $X_3$. Note that when $X_3=11$, the mutation is necessarily found in both parents (Individual 1 and 2) as well as in her daughter (Individual 6).

\subsection{Cancer Risk}

As in Section~\ref{sec:risk} we now consider a female individual $i$ who is unaffected at age $\tau$ (\emph{i.e.} $\{{T}_i>\tau\} \subset \mathrm{FH}$) and denote by  $\pi=\P(X_i \neq 00 | \mathrm{FH})$ its posterior carrier probability. The purpose of this section is to compute the posterior risk of cancer for this individual (with or without the competing risk of death). As previously explained, these risks only depend on $\pi$ and $\tau$. 

\begin{figure}
	\begin{center}
		\includegraphics[width=0.8\textwidth,trim=0 0 0 0,clip]{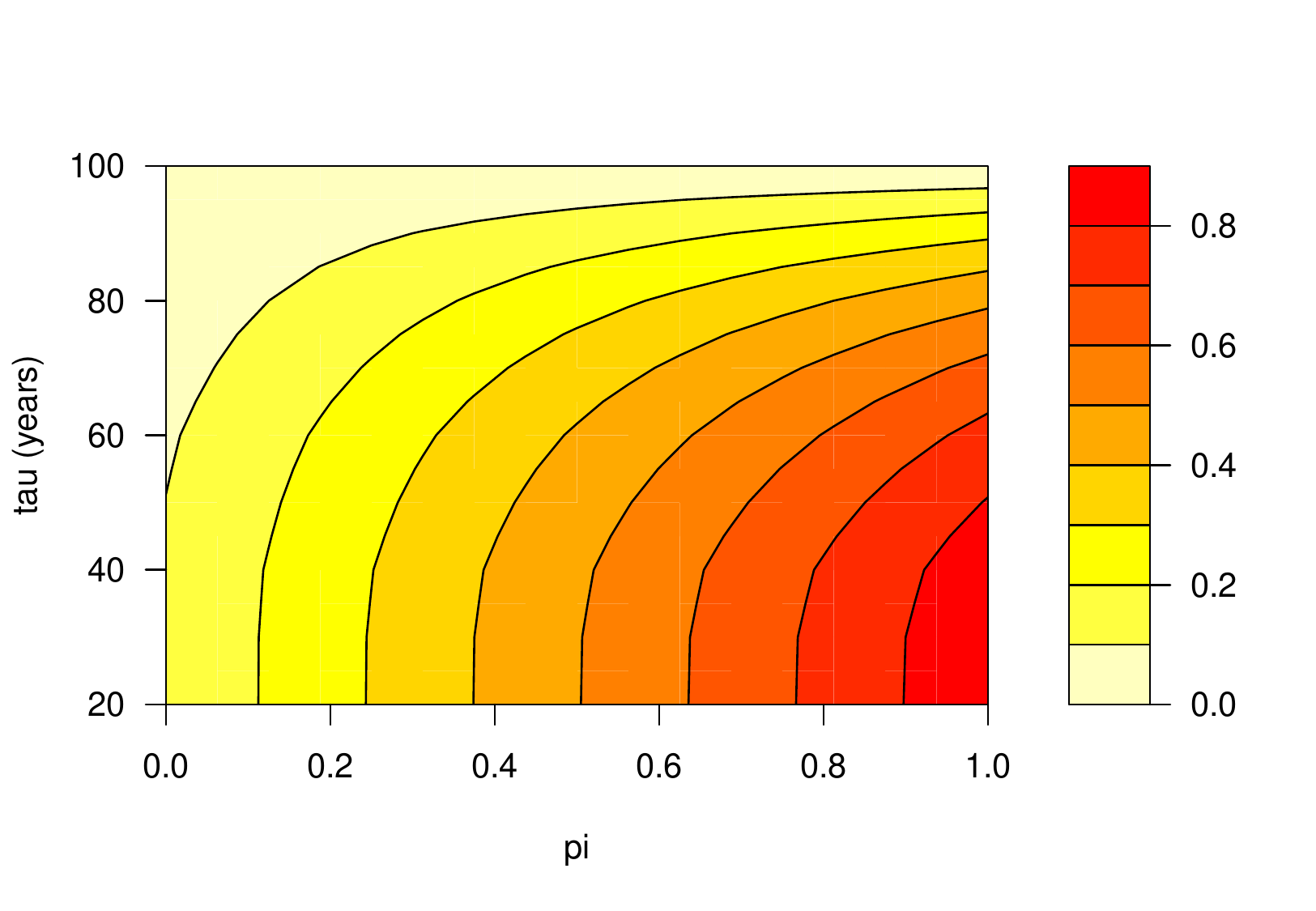}
	\end{center}
	\caption{Individual risk of breast cancer without the competing risk of death and for various $\pi$ and $\tau$}\label{fig:IndRiskPiTau}
\end{figure}

Figure~\ref{fig:IndRiskPiTau} represents the individual risk of breast cancer up to age 100\footnote{Note that we obtain qualitatively similar results with a lower age limit (\emph{e.g.} age 80), but quantitative results are more illustrative with age 100.} without the competing risk of death and variant $\pi$ and $\tau$. We can see that the individual risk of BC rises as $\pi$ increases and $\tau$ decreases. This result is quite intuitive as the  younger a patient is, the longer she will be at risk until age 100; the greater her probability of carrying a deleterious allele, the greater her risk to develop a cancer.

As introduced in the previous section the probability of being a carrier for an unaffected individual decreases with time if she stays unaffected. Assuming Individual 4 was 52 in FH4, Figure~\ref{fig:PcarrierWtime} shows the evolution of the probability of being a carrier for Individual 3 and Individual 4 in FH4. As they stay unaffected we can clearly see the decrease of this probability which has to be taken into account in the computation of the individual risk of breast cancer over time (see Section~\ref{sec:risk}).

\begin{figure}
	\begin{center}
		\includegraphics[width=0.8\textwidth,trim=0 0 0 0,clip]{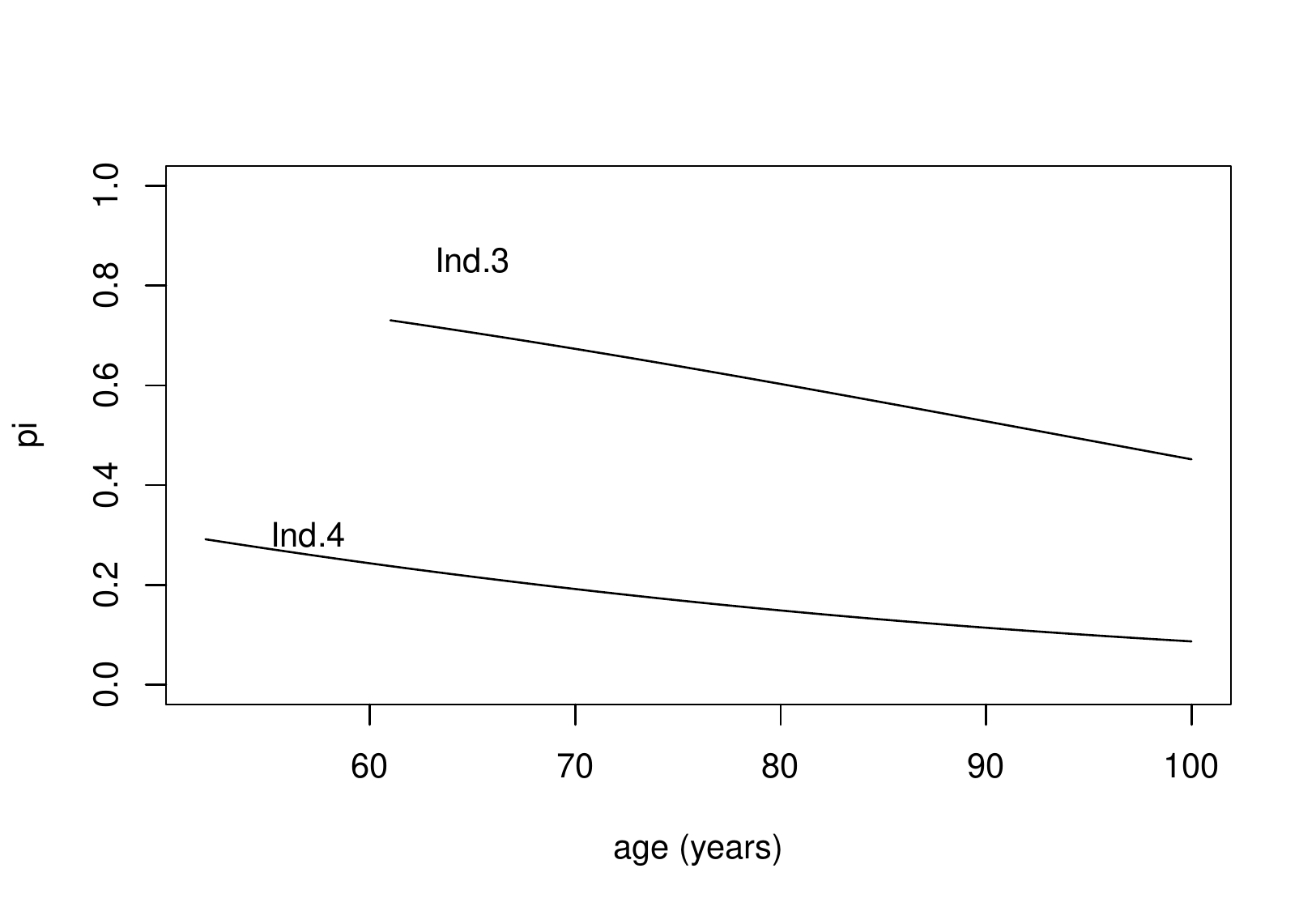}
	\end{center}
	\caption{Posterior probabilities of being a carrier according to the time for Individuals 3 and 4 in FH4 assuming Individual 4 is 52 at the time of the censoring.}\label{fig:PcarrierWtime}
\end{figure}


As explained in Section~\ref{sec:risk}, computing risk with the competing risk of death requires a numerical discretization of age by a fixed step $\Delta t$.  In order to calibrate $\Delta t$ we used $\Delta t=0.01$ as a reference, and observed that $\Delta t=0.1$ is a reasonable balance between accuracy and computational efficiency (data not shown).


\begin{figure}
	\begin{center}
		\includegraphics[width=0.8\textwidth,trim=0 0 0 0,clip]{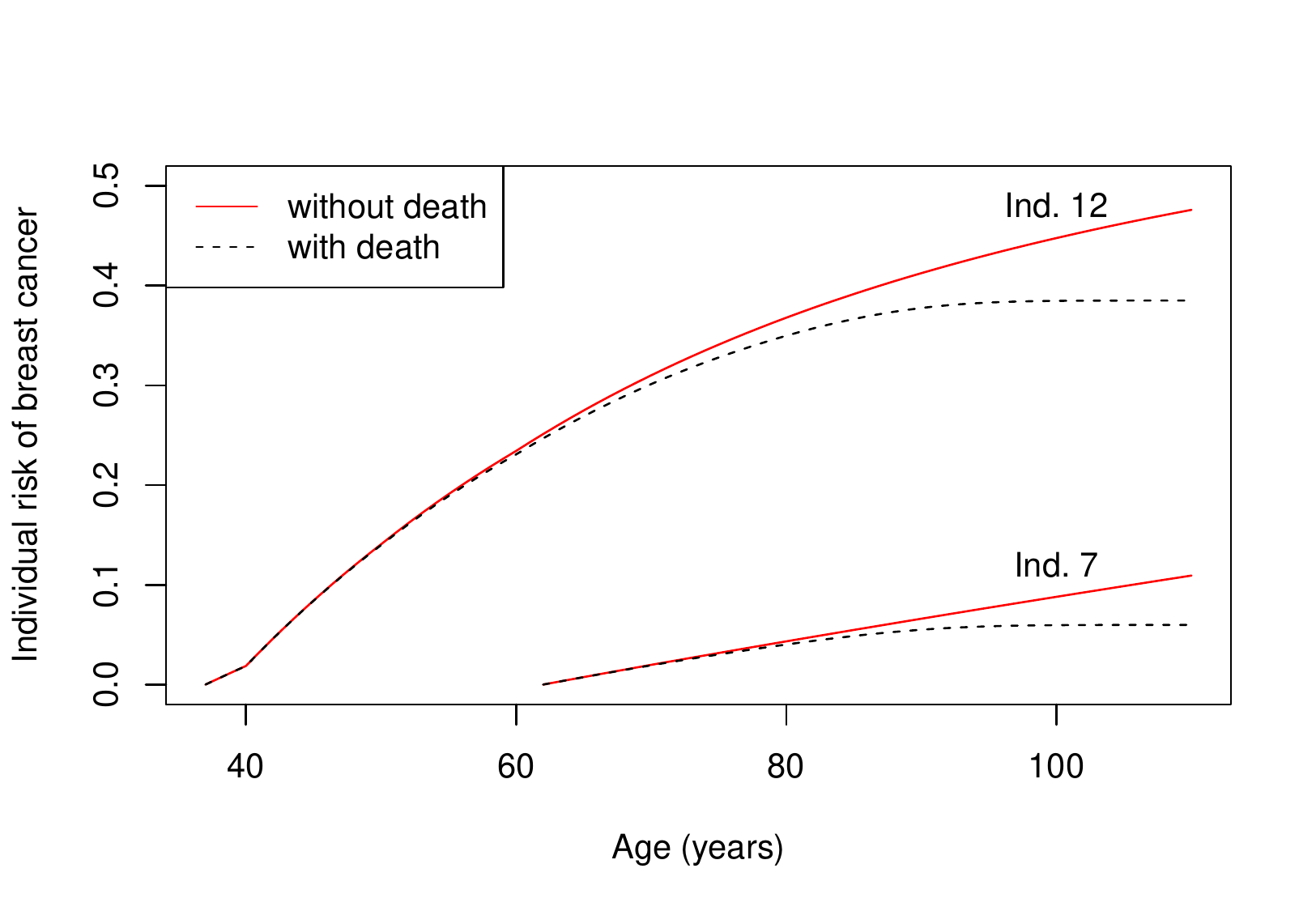}
	\end{center}
	\caption{Individual risk of breast cancer with and without the competing risk of death for individual 7 and 12 of our hypothetical family from $\tau$ to 100 years with and without the competing risk of death.}\label{fig:Compdeathhypofam}
\end{figure}

Figure~\ref{fig:Compdeathhypofam} represents the individual risk of breast cancer for Individual  7 ($\pi=0.553 \%$ and $\tau=62$ years) and Individual 12  ($\pi=44.6 \%$ and $\tau=37$ years) in our hypothetical family from $\tau$ to 100 years with and without taking into account the competing risk of death. We can see that the difference between the two curves for each individual is increasing with the age. The age from which the difference becomes significant varies with the couple ($\pi, \tau$). We also observe that the individual risk of breast cancer eventually reaches a plateau which corresponds to the point where the incidence of breast cancer becomes negligible compared to the incidence of death in the elderly.

\begin{figure}
	\begin{center}
		\includegraphics[width=0.8\textwidth,trim=0 0 0 0,clip]{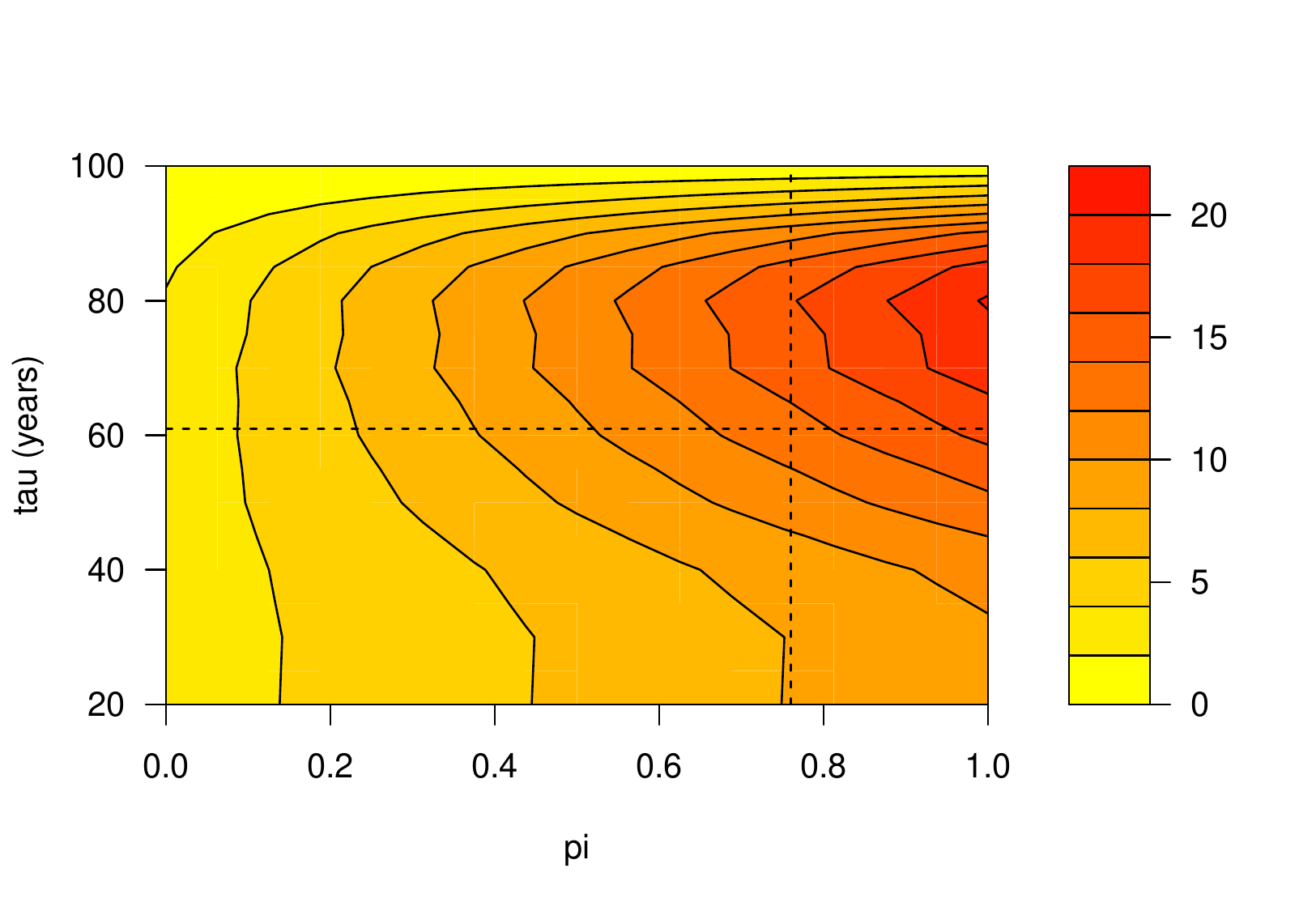}
	\end{center}
	\caption{Difference (in percentage) between the individual risk of breast cancer up to 100 years without and with the competing risk of death for various $\pi$ and $\tau$. Specific values $\pi=76.59 \%$ and $\tau = 61$ are given by the dashed lines. }\label{fig:heatmapDiff}
\end{figure}

Quantitatively, the importance of taking into account the competing risk of death is pointed out in the Figure~\ref{fig:heatmapDiff} which represents the difference between the individual risk of breast cancer up to the age of 100 years for variant couples ($\pi$, $\tau$). For example for Individual 3 in FH4 ($\pi=76.59 \%$, $\tau = 61$, see Figure~\ref{fig:fh}), the error while calculating her individual risk of breast cancer up to the age of 100 years reaches almost 14 \%. If it is clear that the competing risk of death can have a limited effect on the global risk of cancer for certain couples $(\pi,\tau)$ its effect is never totally negligible, and since we provide a rigorous way to take it into account we strongly advocate its use in all circumstances.

\section{Conclusions}

We presented here a general model for genetic disease with variable age at onset. This model, a Bayesian network, combines classical genetic modeling with survival analysis. In order to deal with the (mostly) unobserved genotypes, we first explained in detail how belief propagation can be used to perform likelihood and posterior probability computations. Secondly, we focused on the challenging problem of computing posterior individual disease risks, with or without taking into account the competing risk of death. Finally, we illustrated theses results with the Claus-Easton model for breast and ovarian cancer.
\rev{The R source codes are available upon request for the interested readers.}

For the sake of simplicity, we only considered a bi-allelic locus with standard distribution (autosomal, Hardy-Weinberg, Mendelian allele transmission) but extensions (\emph{e.g.} multi-loci, unbalanced allele transmission, lethal genotypes, etc.) are straightforward. For the survival model, we presented a simple dominant effect without covariates, but again, extensions to any proportional hazard model (\emph{e.g.} recessive, additive, with covariates, etc.) are easy to implement. Incorporating random effects (at the individual and/or familial level) in the model \citep[like in the BOADICEA model, see][]{antoniou2002comprehensive,antoniou2004boadicea} is clearly also possible, but slightly more challenging.

Computation of posterior carrier distributions remains almost unchanged except for the random effect support which must be discretized (five values are claimed to be sufficient in the BOADICEA literature) and for the belief propagation which must be performed once for each of the possible value of the random effect. For posterior risks, calculations get slightly more complex since the posterior individual hazard must now be integrated over the (changing over time) posterior joint distribution of the individual genotype and of the random effect. Basically, all computations are slightly more intensive with random effects, but most results of Section~\ref{sec:risk} remain very similar.

One of the important limitations of the present work is the fact that we assume that all model parameters are known. However, it should be noted that likelihood and conditional likelihood might be easy to compute through the belief propagation which means that we basically provide all the necessary means to estimate the model parameters from actual data. In that context, it is nevertheless critical to deal efficiently with ascertainment issues: the fact that the family ending up in the database are usually precisely the one with the most severe disease family history. But standard methods like the PEL \citep{alarcon2009pel}, which basically are conditional likelihood computations, are known to deal relatively well with the problem.

In order to take into account the competing risk of death, we used death from all causes, which was obtained from registry data \citep{ined}. However, only death without cancer precludes the onset of cancer and we are not interested into death from all causes. Since registry data usually do not report the causes of death it is a difficult task to estimate the risk of death without cancer. This has been studied for instance in \cite{wanneveich2016} through a illness-death model, using registry data and differential equations to model the specific causes of death. Nevertheless, it is very likely that the gain in terms of predictions would be minor as mortality from all causes is likely to be close to mortality without cancer. 

Further work includes all the extensions described above (\emph{e.g.} more complex genetic model, genetic tests, familial random effects, etc.) as well as the development of a clinical web application for the Claus-Easton model in close collaboration with the cancer genetics department of the \emph{Institut Curie}. From the methodological point of view, we plan to focus on the computation of more complex posterior distribution like the number of carriers in any subgroup of individuals and/or the familial posterior risk (time before any family member at risk is diagnosed).

\section*{Acknowledgments}

We would first like to thank both anonymous reviewers for their constructive comments and remarks.
This work received the support of both the \emph{Institut de Recherche en Santé Publique} (IRESP) and the \emph{Ligue National Contre le Cancer} (LNCC). Alexandra Lefebvre's internship was funded by the \emph{Institut Curie}. We finally warmly thank Antoine de Pauw (\emph{Institut Curie}) for his continuous friendly support and for suggesting the hypothetical family presented here.

\section*{Conflict of Interest}

The author(s) declare(s) that there is no conflict of interest regarding the publication of this paper.

\appendix



\section{Proofs for the Carrier Risk Section}\label{appendix:bp}

For all $k \in \{1,\ldots,K\}$ we recursively define: $u_k= \{k\} \cup_{j \in \mathrm{from}_k} u_j$, 
$U_k = \cup_{j \in u_k} C_j$, and $V_k = \cup_{j \notin u_k} C_j$. Then we can compute the so-called \emph{forward} and \emph{backward} quantities over any \rev{separator $S_j=C_j \cap C_{\mathrm{to}_j}$}:
$$
F_j(S_j)=\sum_{  U_j \setminus S_j } \prod_{X_i \in U_j^*} K_i \left(X_i | X_{\mathrm{pa}_i} \right) 
\quad\text{and}\quad
B_j(S_j) = \sum_{V_j \setminus S_j} \prod_{X_i \in V_j^*} K_i \left(X_i | X_{\mathrm{pa}_i} \right)
$$
where $U_j^* = \{X_i \in U_j, \exists k \in u_j, \mathrm{of}_i=k \}$  and $V_j^* = \{X_i \in V_j, \exists k \notin u_j, \mathrm{of}_i=k \}$.

The key is then to prove that, for all $j\in \{1,\ldots,K\}$ we have:
\begin{equation}\label{eq:fb1}
\P(S_j , \text{FH})= F_j(S_j)B_j(S_j)
\end{equation}
\begin{equation}\label{eq:fb2}
\P(C_k , \text{FH})=  \Phi_k(C_k)  \times \prod_{j \in \mathrm{from}_k}  F_j(S_j)  \times B_k(S_k).
\end{equation}

For proving Eq.~(\ref{eq:fb1}), we start by noticing that the \rev{JT (Junction Tree)} properties \citep{koller2009probabilistic} give:
$\{X_1,\ldots,X_n\} \setminus S_j = (U_j \setminus S_j)\uplus (V_j\setminus S_j)$ and $\{X_1,\ldots,X_n\} = U_j^* \uplus V_j^*$ (both being disjoint unions). We therefore have:
\begin{eqnarray*}
\P(S_j , \text{FH})&=& \sum_{  U_j \setminus S_j } \sum_{V_j \setminus S_j}
\prod_{X_i \in U_j^*} K_i \left(X_i | X_{\mathrm{pa}_i} \right)
\prod_{X_i \in V_j^*} K_i \left(X_i | X_{\mathrm{pa}_i} \right)\\
&=& 
\underbrace{ \left( \sum_{  U_j \setminus S_j } \prod_{X_i \in U_j^*} K_i \left(X_i | X_{\mathrm{pa}_i} \right) \right) }_{F_j(S_j)} \times 
\underbrace{ \left(\sum_{V_j \setminus S_j} \prod_{X_i \in V_j^*} K_i \left(X_i | X_{\mathrm{pa}_i} \right) \right)}_{B_j(S_j)}
\end{eqnarray*}
the factorization between the first and second equation being possible thanks to the fact that
$\left(\cup_{X_i \in U_j^*} \{ X_i , X_{\mathrm{pa}_i} \}\right)
\cap
\left(\cup_{X_i \in V_j^*} \{ X_i , X_{\mathrm{pa}_i} \}\right) = S_j$ (JT properties again).


The proof is basically the same for Eq.~(\ref{eq:fb2}) using $\{X_1,\ldots,X_n\} \setminus C_k = \uplus_{j \in \mathrm{from}_k} (U_j \setminus S_j) \uplus (V_k \setminus S_k)$ \rev{we get:}
\alex{
\begin{eqnarray*}
	\mathbb{P}(C_k,\mathrm{FH}) &=& \sum_{\{X_1,\ldots,X_n\} \setminus C_k} \prod_{X_i \in \{X_1,\ldots,X_n\}} K_i \left(X_i | X_{\mathrm{pa}_i} \right)\\
	&=&  \Phi_k(C_k) \prod_{j \in \mathrm{from}_k} \sum_{  U_j \setminus S_j } \sum_{  V_k \setminus S_k } \prod_{X_i \in U_j^*} K_i \left(X_i | X_{\mathrm{pa}_i} \right)
	\prod_{X_i \in V_k^*} K_i \left(X_i | X_{\mathrm{pa}_i} \right)\\
	&=& \Phi_k(C_k)  \prod_{j \in \mathrm{from}_k} \underbrace{\sum_{  U_j \setminus S_j }  \prod_{X_i \in U_j^*} K_i \left(X_i | X_{\mathrm{pa}_i} \right)}_{F_j(S_j)} 
	\underbrace{\sum_{  V_k \setminus S_k } \prod_{X_i \in V_k^*} K_i \left(X_i | X_{\mathrm{pa}_i} \right)}_{B_k(S_k)}.
\end{eqnarray*}
}
\alex{The factorisation being possible as 
 $\uplus_{j \in \mathrm{from}_k} (U_j \setminus S_j) 
 \cap 
 (V_k \setminus S_k) = \emptyset$ (running intersection) 
and 
$\forall j, \forall k, U_j^* \subseteq U_j$
and
$V_k^* \subseteq V_k$.
}

Finally, the recursive expression of the forward and backward quantities can be easily derived from equations (\ref{eq:fb1}) and (\ref{eq:fb2}):
\rev{
\begin{eqnarray*}
\P(S_k , \text{FH}) &= & \sum_{C_k \setminus S_k}  \P(C_k , \text{FH}) \\
F_k(S_k)\cancel{B_k(S_k)} 
& =&  \sum_{C_k \setminus S_k} \prod_{j \in \mathrm{from}_k}  F_j(S_j) \times \Phi_k(C_k)  \times \cancel{B_k(S_k)}
\end{eqnarray*}
}
which gives the forward recursion by simplifying the $B_k(S_k)$ term.

\section{Proofs for the Disease Risk Section} \label{appendix:risk}

\begin{proof}[Proof of Theorem \ref{th:survnocr}]

\olivier{For clarity, we recall that $S_0(t)=\mathbb{P}(T_i > t |  X_i=00 )$, $S_1(t)=\mathbb{P}(T_i > t |  X_i\neq00 )$, $\pi=\P(X_i \neq 00 | \mathrm{FH},{T}_i>\tau)$ and $S(t)=\P({T}_i>t|\mathrm{FH},{T}_i>\tau)$, for $i=1,\ldots,n$, and that $\{T_i>\tau\} \subset \mathrm{FH}$. }\greg{Since the $T_i$ are independent conditionally to the $X_i$, the distribution of $T_i$ conditionally on $X_i$ obviously does not depend on $\mathrm{FH}$ (for values of $X_i$ which are not forbidden by $\mathrm{FH}$). This is why $\mathrm{FH}$ can be omitted almost everywhere  in the following proof as soon as $\pi$ has been computed.}

We have $S(t)=\sum_{X_i} \P(T_i>t,{X_i}|T_i>\tau,\mathrm{FH})$, \olivier{where the notation $\sum_{X_i}$ represents the summation over the different possible values of $X_i$, that is $X_i=00$ or $X_i\neq 00$}. Using Bayes' rule, 
\begin{align*}
\P(T_i>t,X_i\neq 00|T_i>\tau,\mathrm{FH})&=\P(T_i>t|X_i\neq 00,T_i>\tau,\mathrm{FH})\times\P(X_i\neq 00|T_i>\tau,\mathrm{FH})\\
& = \frac{\P(T_i>t,X_i\neq 00,\mathrm{FH})}{\P(T_i>\tau,X_i\neq 00,\mathrm{FH})}\times \pi \\
& = \frac{\P(T_i>t|X_i\neq 00,\mathrm{FH})}{\P(T_i>\tau|X_i\neq 00,\mathrm{FH})}\times \pi=\frac{S_1(t)}{S_1(\tau)}\pi,
\end{align*}
\olivier{where we used the fact that $\P(T_i>t|X_i\neq 00,\mathrm{FH} )=\P(T_i>t|X_i\neq 00 )$}.
We similarly prove that $\P(T_i>t,X_i= 00|T_i>\tau,\mathrm{FH})=(1-\pi)S_0(t)/S_0(\tau).$

The next result is proved using Bayes' rule:
\begin{align*}
\P(X_i \neq 00 | \mathrm{FH}, T_i>t)&=\frac{\P(X_i \neq 00 , \mathrm{FH}, T_i>t)}{\P(\mathrm{FH}, T_i>t)}\\
&=\frac{\P(T_i>t|X_i \neq 00,T_i>\tau )}{\P(T_i>t|\mathrm{FH},T_i>\tau)}\P(X_i \neq 00 |\mathrm{FH},T_i>\tau),
\end{align*}
\olivier{where we also used the fact that $\P(T_i>t|X_i \neq 00,\mathrm{FH},T_i>\tau )=\P(T_i>t|X_i \neq 00,T_i>\tau )$}. 

We then directly have $\P(T_i>t|X_i \neq 00,T_i>\tau )=S_1(t)/S_1(\tau)$ from Bayes' rule, $\P(X_i \neq 00 |\mathrm{FH},T_i>\tau)=\pi$ and $\P(T_i>t|\mathrm{FH},T_i>\tau)=S(t)$ which concludes the proof.

Finally, \olivier{in order to prove Equation~\eqref{eq:posthaz}, we recall that }
\begin{align*}
\lambda(t)&=\lim_{\Delta t\to 0}\frac{\P(t\leq T_i<t+\Delta t | T_i\geq t,\mathrm{FH})}{\Delta t}\\
\lambda_0(t)&=\lim_{\Delta t\to 0}\frac{\P(t\leq T_i<t+\Delta t | T_i\geq t,X_i= 00)}{\Delta t}\\
\lambda_1(t)&=\lim_{\Delta t\to 0}\frac{\P(t\leq T_i<t+\Delta t | T_i\geq t,X_i\neq 00)}{\Delta t}
\end{align*}
Then, 
\begin{align*}
\P(t\leq T_i<t+\Delta t | T_i\geq t,\mathrm{FH})&=\sum_{X_i} \P(t\leq T_i<t+\Delta t,X_i | T_i\geq t,\mathrm{FH})\\
&=\sum_{X_i} \P(t\leq T_i<t+\Delta t,X_i,\mathrm{FH})/\P(T_i\geq t,\mathrm{FH})\\
&= \sum_{X_i} \P(t\leq T_i<t+\Delta t|X_i ) \P(X_i | T_i\geq t,\mathrm{FH}),
\end{align*}
\olivier{using Bayes' rule and the fact that $\P(t\leq T_i<t+\Delta t|X_i, \mathrm{FH})=\P(t\leq T_i<t+\Delta t|X_i )$ and $\P(X_i,\mathrm{FH} | T_i\geq t,\mathrm{FH})=\P(X_i | T_i\geq t,\mathrm{FH})$}. Dividing by $\Delta t$ and taking the limit as $\Delta t$ tends to $0$ gives
\begin{align*}
\lambda(t)=\lambda_1(t)\times \P(X_i\neq 00 | T_i\geq t,\mathrm{FH})+\lambda_0(t)\times \P(X_i= 00 | T_i\geq t,\mathrm{FH})
\end{align*}
We showed previously that $\P(X_i\neq 00 | T_i\geq t,\mathrm{FH})=\pi S_1(t)/(S(t)S_1(\tau))$ and $\P(X_i= 00 | T_i\geq t,\mathrm{FH})=(1-\pi) S_0(t)/(S(t)S_0(\tau))$ which concludes the proof.
\end{proof}

\begin{proof}[Proof of Lemma \ref{th:survcr}]
The first part of the equality is a standard result in the competing risk setting: we have, from Bayes' rule,
\begin{align*}
\lambda_\alpha(u) & =\lim_{\Delta t\to 0}\frac{\P(t\leq T_i<t+\Delta t |\mathrm{FH})}{\Delta t\,\P(T_i^*\geq t|\mathrm{FH})}
\end{align*}
and consequently $\lambda_\alpha(u)S_\beta(u)$ is equal to the density of $T$ conditionally to $\mathrm{FH}$. Then, since $\lambda_\alpha(u)=\alpha_j$ for $u\in]c_{j-1},c_j]$ we have
\rev{\begin{align*}
\P(T_i\leq t| T_i>c_{j-1}, \mathrm{FH})& =\int_{c_{j-1}}^{t} \lambda_\alpha(u) S_\beta(u) du =\alpha_j \int_{c_{j-1}}^{t} S_\beta(u) du\\
&=\alpha_j \int_{c_{j-1}}^{t} \exp\left(-\int_0^u \lambda_{\beta}(v)dv\right) du
\end{align*}}
Now, for $u\in ]c_{j-1},t]$, $t\leq c_j$,
\begin{align*}
\int_0^u\lambda_{\beta}(v)dv=\int_0^{c_{j-1}}\lambda_{\beta}(v)dv+\beta_j (u-c_{j-1})
\end{align*}
and
\begin{align*}
\int_{c_{j-1}}^{t}\exp\left(-\int_0^u \lambda_{\beta}(v)dv\right)du & =\exp\left(-\int_0^{c_{j-1}}\lambda_{\beta}(v)dv\right) \int_{c_{j-1}}^t \exp(-\beta_j (u-c_{j-1}))du\\
& = S_{\beta}(c_{j-1}) \int_{c_{j-1}}^t \exp(-\beta_j (u-c_{j-1}))du
\end{align*}
The integral on the right side of the equation is straightforward to compute. This gives,
\begin{align*}
S_{\beta}(c_{j-1})\int_{c_{j-1}}^t \exp(-\beta_j (u-c_{j-1}))du&=\frac{1}{\beta_j}\Big(S_{\beta}(c_{j-1})-S_{\beta}(c_{j-1})\exp(-\beta_j(t-c_{j-1}))\Big)
\end{align*}
Finally, we conclude by noticing that
\begin{align*}
S_{\beta}(t) & = \exp\left(-\int_0^{c_{j-1}} \lambda_{\beta}(u)du-\int_{c_{j-1}}^{t} \lambda_{\beta}(u)du\right)\\
&=S_{\beta}(c_{j-1})\exp\left(-\beta_j(t-c_{j-1})\right)
\end{align*}
\end{proof}

\bibliographystyle{plainnat}
\bibliography{biblio}

\begin{thebibliography}{28}
\providecommand{\natexlab}[1]{#1}
\providecommand{\url}[1]{\texttt{#1}}
\expandafter\ifx\csname urlstyle\endcsname\relax
  \providecommand{\doi}[1]{doi: #1}\else
  \providecommand{\doi}{doi: \begingroup \urlstyle{rm}\Url}\fi

\bibitem[Alarcon et~al.(2009)Alarcon, Bourgain, Gauthier-Villars,
  Plant{\'e}-Bordeneuve, Stoppa-Lyonnet, and
  Bona{\"\i}ti-Pelli{\'e}]{alarcon2009pel}
Flora Alarcon, Catherine Bourgain, Marion Gauthier-Villars, Violaine
  Plant{\'e}-Bordeneuve, D~Stoppa-Lyonnet, and Catherine
  Bona{\"\i}ti-Pelli{\'e}.
\newblock Pel: an unbiased method for estimating age-dependent genetic disease
  risk from pedigree data unselected for family history.
\newblock \emph{Genetic epidemiology}, 33\penalty0 (5):\penalty0 379--385,
  2009.

\bibitem[Andersen et~al.(1993)Andersen, Borgan, Gill, and Keiding]{ABGK}
P.~K. Andersen, {\O}.~Borgan, R.~D. Gill, and N.~Keiding.
\newblock \emph{Statistical models based on counting processes}.
\newblock Springer Series in Statistics. Springer-Verlag, New York, 1993.
\newblock ISBN 0-387-97872-0.

\bibitem[Andersen and Keiding(2012)]{andersen2012}
Per~Kragh Andersen and Niels Keiding.
\newblock Interpretability and importance of functionals in competing risks and
  multistate models.
\newblock \emph{Statistics in medicine}, 31\penalty0 (11-12):\penalty0
  1074--1088, 2012.

\bibitem[Antoniou et~al.(2002)Antoniou, Pharoah, McMullan, Day, Stratton, Peto,
  Ponder, and Easton]{antoniou2002comprehensive}
AC~Antoniou, PDP Pharoah, G.~McMullan, NE~Day, MR~Stratton, J.~Peto, BJ~Ponder,
  and DF~Easton.
\newblock {A comprehensive model for familial breast cancer incorporating
  BRCA1, BRCA2 and other genes}.
\newblock \emph{British journal of cancer}, 86\penalty0 (1):\penalty0 76--83,
  2002.
\newblock ISSN 0007-0920.

\bibitem[Antoniou et~al.(2004)Antoniou, Pharoah, Smith, and
  Easton]{antoniou2004boadicea}
AC~Antoniou, PPD Pharoah, P.~Smith, and DF~Easton.
\newblock {The BOADICEA model of genetic susceptibility to breast and ovarian
  cancer}.
\newblock \emph{British journal of cancer}, 91\penalty0 (8):\penalty0
  1580--1590, 2004.
\newblock ISSN 0007-0920.

\bibitem[Berry et~al.(2010)Berry, Ngo, Samelson, and Kiel]{berry2010}
Sarah~D Berry, Long Ngo, Elizabeth~J Samelson, and Douglas~P Kiel.
\newblock Competing risk of death: an important consideration in studies of
  older adults.
\newblock \emph{Journal of the American Geriatrics Society}, 58\penalty0
  (4):\penalty0 783--787, 2010.

\bibitem[Chen et~al.(2006)Chen, Wang, Lee, Nafa, Lee, Romans, Watson, Gruber,
  Euhus, Kinzler, et~al.]{chen2006prediction}
Sining Chen, Wenyi Wang, Shing Lee, Khedoudja Nafa, Johanna Lee, Kathy Romans,
  Patrice Watson, Stephen~B Gruber, David Euhus, Kenneth~W Kinzler, et~al.
\newblock Prediction of germline mutations and cancer risk in the lynch
  syndrome.
\newblock \emph{Jama}, 296\penalty0 (12):\penalty0 1479--1487, 2006.

\bibitem[Claus et~al.(1991)Claus, Risch, and Thompson]{claus1991genetic}
Elisabeth~B Claus, N~Risch, and W~Douglas Thompson.
\newblock Genetic analysis of breast cancer in the cancer and steroid hormone
  study.
\newblock \emph{American journal of human genetics}, 48\penalty0 (2):\penalty0
  232, 1991.

\bibitem[Claus et~al.(1994)Claus, Risch, and Thompson]{claus1994autosomal}
Elizabeth~B Claus, Neil Risch, and W~Douglas Thompson.
\newblock Autosomal dominant inheritance of early-onset breast cancer.
  implications for risk prediction.
\newblock \emph{Cancer}, 73\penalty0 (3):\penalty0 643--651, 1994.

\bibitem[Darwiche(2001)]{darwiche2001recursive}
Adnan Darwiche.
\newblock Recursive conditioning.
\newblock \emph{Artificial Intelligence}, 126\penalty0 (1-2):\penalty0 5--41,
  2001.

\bibitem[De~Pauw(2012)]{de2012estimation}
Antoine De~Pauw.
\newblock \emph{Estimation des risques de cancer du sein et de l'ovaire des
  femmes sans mutation des g{\`e}nes BRCA1et BRCA2: apport des mod{\`e}les de
  calcul de risque}.
\newblock PhD thesis, Paris 7, 2012.

\bibitem[Easton et~al.(1993)Easton, Bishop, Ford, and
  Crockford]{easton1993genetic}
DF~Easton, DT~Bishop, D~Ford, and GP~Crockford.
\newblock Genetic linkage analysis in familial breast and ovarian cancer:
  results from 214 families. the breast cancer linkage consortium.
\newblock \emph{American journal of human genetics}, 52\penalty0 (4):\penalty0
  678, 1993.

\bibitem[Elston et~al.(1992)Elston, George, and Severtson]{elston1992elston}
Robert~C Elston, Varghese~T George, and Forrestt Severtson.
\newblock The elston-stewart algorithm for continuous genotypes and
  environmental factors.
\newblock \emph{Human heredity}, 42\penalty0 (1):\penalty0 16--27, 1992.

\bibitem[Fishelson and Geiger(2002)]{fishelson2002exact}
Ma{\'a}yan Fishelson and Dan Geiger.
\newblock Exact genetic linkage computations for general pedigrees.
\newblock \emph{Bioinformatics}, 18\penalty0 (suppl 1):\penalty0 S189--S198,
  2002.

\bibitem[Hall et~al.(1990)Hall, Lee, Newman, Morrow, Anderson, Huey, and
  King]{hall1990linkage}
J.M. Hall, M.K. Lee, B.~Newman, J.E. Morrow, L.A. Anderson, B.~Huey, and M.C.
  King.
\newblock {Linkage of early-onset familial breast cancer to chromosome 17q21}.
\newblock \emph{Science}, 250\penalty0 (4988):\penalty0 1684, 1990.
\newblock ISSN 0036-8075.

\bibitem[INED(2017)]{ined}
INED.
\newblock {Death incidence in France, period 2012-2014}.
\newblock
  \url{https://www.ined.fr/en/everything_about_population/data/france/deaths-causes-mortality/mortality-tables/},
  2017.

\bibitem[Jacqmin-Gadda et~al.(2014)Jacqmin-Gadda, Blanche, Chary, Loub{\`e}re,
  Amieva, and Dartigues]{jacqmin2014}
H{\'e}l{\`e}ne Jacqmin-Gadda, Paul Blanche, Emilie Chary, Lucie Loub{\`e}re,
  H{\'e}l{\`e}ne Amieva, and Jean-Fran{\c{c}}ois Dartigues.
\newblock Prognostic score for predicting risk of dementia over 10 years while
  accounting for competing risk of death.
\newblock \emph{American journal of epidemiology}, 180\penalty0 (8):\penalty0
  790--798, 2014.

\bibitem[Koller and Friedman(2009)]{koller2009probabilistic}
Daphne Koller and Nir Friedman.
\newblock \emph{Probabilistic graphical models: principles and techniques}.
\newblock MIT press, 2009.

\bibitem[Kruglyak et~al.(1996)Kruglyak, Daly, Reeve-Daly, and
  Lander]{kruglyak1996parametric}
Leonid Kruglyak, Mark~J Daly, Mary~Pat Reeve-Daly, and Eric~S Lander.
\newblock Parametric and nonparametric linkage analysis: a unified multipoint
  approach.
\newblock \emph{American journal of human genetics}, 58\penalty0 (6):\penalty0
  1347, 1996.

\bibitem[Lange et~al.(2013)Lange, Papp, Sinsheimer, Sripracha, Zhou, and
  Sobel]{lange2013mendel}
Kenneth Lange, Jeanette~C Papp, Janet~S Sinsheimer, Ram Sripracha, Hua Zhou,
  and Eric~M Sobel.
\newblock Mendel: the swiss army knife of genetic analysis programs.
\newblock \emph{Bioinformatics}, 29\penalty0 (12):\penalty0 1568--1570, 2013.

\bibitem[Lauritzen(1996)]{lauritzen1996graphical}
Steffen~L Lauritzen.
\newblock \emph{Graphical models}, volume~17.
\newblock Clarendon Press, 1996.

\bibitem[Lauritzen and Sheehan(2003)]{lauritzen2003graphical}
Steffen~L Lauritzen and Nuala~A Sheehan.
\newblock Graphical models for genetic analyses.
\newblock \emph{Statistical Science}, pages 489--514, 2003.

\bibitem[Mehrgou and Akouchekian(2016)]{mehrgou2016importance}
Amir Mehrgou and Mansoureh Akouchekian.
\newblock The importance of brca1 and brca2 genes mutations in breast cancer
  development.
\newblock \emph{Medical Journal of the Islamic Republic of Iran}, 30:\penalty0
  369, 2016.

\bibitem[O'Connell and Weeks(1998)]{o1998pedcheck}
Jeffrey~R O'Connell and Daniel~E Weeks.
\newblock Pedcheck: a program for identification of genotype incompatibilities
  in linkage analysis.
\newblock \emph{The American Journal of Human Genetics}, 63\penalty0
  (1):\penalty0 259--266, 1998.

\bibitem[Palin et~al.(2011)Palin, Campbell, Wright, Wilson, and
  Durbin]{palin2011identity}
Kimmo Palin, Harry Campbell, Alan~F Wright, James~F Wilson, and Richard Durbin.
\newblock Identity-by-descent-based phasing and imputation in founder
  populations using graphical models.
\newblock \emph{Genetic epidemiology}, 35\penalty0 (8):\penalty0 853--860,
  2011.

\bibitem[Plante-Bordeneuve et~al.(2003)Plante-Bordeneuve, Carayol, Ferreira,
  Adams, Clerget-Darpoux, Misrahi, Said, and
  Bona{\"\i}ti-Pelli\'e]{plantebordeneuve2003gst}
V.~Plante-Bordeneuve, J.~Carayol, A.~Ferreira, D.~Adams, F.~Clerget-Darpoux,
  M.~Misrahi, G.~Said, and C.~Bona{\"\i}ti-Pelli\'e.
\newblock {Genetic study of transthyretin amyloid neuropathies: carrier risks
  among French and Portuguese families.}
\newblock \emph{J Med Genet}, 40\penalty0 (11):\penalty0 e120, 2003.

\bibitem[Totir et~al.(2009)Totir, Fernando, and Abraham]{totir2009efficient}
Liviu~R Totir, Rohan~L Fernando, and Joseph Abraham.
\newblock An efficient algorithm to compute marginal posterior genotype
  probabilities for every member of a pedigree with loops.
\newblock \emph{Genetics Selection Evolution}, 41\penalty0 (1):\penalty0 52,
  2009.

\bibitem[Wanneveich et~al.(2016)Wanneveich, Jacqmin-Gadda, Dartigues, and
  Joly]{wanneveich2016}
Mathilde Wanneveich, H{\'e}l{\`e}ne Jacqmin-Gadda, Jean-Fran{\c{c}}ois
  Dartigues, and Pierre Joly.
\newblock Impact of intervention targeting risk factors on chronic disease
  burden.
\newblock \emph{Statistical methods in medical research}, page
  0962280216631360, 2016.

\end{thebibliography}

\end{document}